\patchcmd{\thmhead}{(#3)}{#3}{}{}
\algnewcommand\algorithmicinput{\textbf{Input:}}
\algnewcommand\Input{\item[\algorithmicinput]}
\algnewcommand\algorithmicoutput{\textbf{Output:}}
\algnewcommand\Output{\item[\algorithmicoutput]}
\DeclareMathOperator{\supp}{supp} 
\DeclareMathOperator{\ev}{ev} 
\DeclareMathOperator{\PRS}{PRS}
\DeclareMathOperator{\PRM}{PRM}
\DeclareMathOperator{\RM}{RM}
\DeclareMathOperator{\wt}{wt}
\DeclareMathOperator{\RS}{RS}
\newcommand{\F}{{\mathbb{F}}}
\newcommand{\fq}{\mathbb{F}_q}
\newcommand{\PM}{{\mathbb{P}^{m}}}
\newcommand{\A}{{\mathbb{A}}}
\newcommand{\DRM}{D^{\mathbb{A}}_d}
\newcommand{\DRMdd}{D^{\mathbb{A}}_{d-1}}
\newcommand{\DPRM}{D^{\mathbb{P}}_d}
\newcommand{\DPRMq}{D^{\mathbb{P}}_{d-(q-1)}}
\newcommand{\fgood}{{f_{good}}}
\newcommand{\fbad}{{f_{bad}}}
\newcommand{\fgoodo}{{f_{good}^0}}
\newcommand{\fbado}{{f_{bad}^0}}
\newcommand{\cbad}{{c_{bad}}}
\newcommand{\cgood}{{c_{good}}}
\newcommand{\evp}[1]{\ev^{\mathbb{P}}_{{#1}}}
\newcommand{\eva}[1]{\ev^{\mathbb{A}}_{{#1}}}
\newcommand{\etadm}{{\eta_d(m)}}
\newcommand{\comp}{{\theta^{\A}(m)}}
\DeclarePairedDelimiter\abs{\lvert}{\rvert}%
\DeclarePairedDelimiter\norm{\lVert}{\rVert}%
\let\oldabs\abs
\def\abs{\@ifstar{\oldabs}{\oldabs*}}
\let\oldnorm\norm
\def\norm{\@ifstar{\oldnorm}{\oldnorm*}}
\newtheorem{thm}{Theorem}[section]
\newtheorem{prop}[thm]{Proposition}
\newtheorem{cor}[thm]{Corollary}
\newtheorem{lem}[thm]{Lemma}
\theoremstyle{definition}
\newtheorem{defn}[thm]{Definition} 
\newtheorem{rem}[thm]{Remark} 
\newtheorem{ex}[thm]{Example}
\title{Recursive decoding of projective Reed-Muller codes}
\author{Rodrigo San-José}
\curraddr{
Department of Mathematics\\ Virginia Tech\\ Blacksburg, VA USA. \textit{Previous address:} IMUVA-Mathematics Research Institute, Universidad de Valladolid, 47011 Valladolid (Spain).
}
\email{rsanjose@vt.edu}
\thanks{This work was supported in part by the following grants: Grant PID2022-138906NB-C21 funded by MICIU/AEI/10.13039/501100011033 and by ERDF/EU, and FPU20/01311 funded by the Spanish Ministry of Universities.}
\subjclass[2020]{Primary: 94B35. Secondary: 94B05, 11T71}
\keywords{Projective Reed-Muller codes, decoding, recursive construction}
\begin{document}
\maketitle

\begin{abstract}
We give a recursive decoding algorithm for projective Reed-Muller codes making use of a decoder for affine Reed-Muller codes. We determine the number of errors that can be corrected in this way, which is the current highest for decoders of projective Reed-Muller codes. We determine the degrees for which we can decode up to the error correction capability of these codes, and we compute the order of complexity of the algorithm, given by that of the chosen decoder for affine Reed-Muller codes. We also show how the decoder behaves when increasing the number of variables and the size of the field. 
\end{abstract}

\section{Introduction}
Projective Reed-Muller (PRM) codes are a family of evaluation codes introduced in \cite{lachaudPRM}. They are obtained by evaluating multivariate homogeneous polynomials in the projective space, and their basic parameters were completely determined in \cite{sorensen,ghorpadeMinimumweightPRM}. Their affine counterpart, affine Reed-Muller (RM) codes, have been widely studied. We know the basic parameters of RM codes \cite{kasamiRM,delsarteRM}, but also many additional properties \cite{pellikaanGHWRM,capacityRM,steaneRM,polarRM,RMtheoryandalgorithms,bookRM,kasamiWeightStructureRM}. There has been some recent work in the direction of closing the gap in knowledge between RM and PRM codes \cite{beelenGHWPRM,sanjoseRecursivePRM,sanjoseHullsPRM,sanjoseHullvariationPRM,ghorpadeMinimumweightPRM,sanjoseSSCPRM,sanjoseSSCPRS,kaplanHullsPRM,songHullParamPRM}. However, one of the main aspects required for a family of codes to be useful in practice is to have an efficient decoding algorithm. RM codes admit several efficient decoding algorithms, most of them derived from the fact that they can be seen as algebraic geometry codes \cite{sakata1,sakata2,sakataplusvoting,duursmaMajority,fengraoMajority}. For PRM codes, the only specific decoder was given in \cite{decodingRMP}. This decoder is efficient in terms of complexity, but does not decode up to the error correction capability of PRM codes. 

In \cite{sanjoseRecursivePRM}, a recursive construction for PRM codes is given. This recursive construction is similar to a $(u,u+v)$ construction, which is known to have an efficient decoding algorithm \cite{decodingMPC,hernandoDecodingMPC2}. Moreover, affine RM codes also admit a recursive construction which has been used to obtain decoding algorithms \cite{abbeRecursiveDecodingRM,dumerRecursiveDecodingRM}. In this paper, we propose an algorithm based on the recursive construction from \cite{sanjoseRecursivePRM} to decode PRM codes. 

In Section \ref{s:preliminaries} we provide the necessary background. In this section, we also introduce a particular ordering for the points of the projective space, which is a key element for the recursive structure of PRM codes. We define an integer $\eta_d(m)$, which will determine the error correction capability of our algorithm, and we study some of its properties. Since in general $\eta_d(m)$ is only a lower bound for the minimum distance of the corresponding PRM code, the decoder may not decode up to the error correction capability of the code for all possible degrees, and we determine the degrees for which it does. In Section \ref{s:decodingalg}, we study how to decode PRM codes recursively using decoders for RM codes. First, in Subsection \ref{ss:prs} we show how this idea gives a decoding algorithm for projective Reed-Solomon (PRS) which decodes up to their error correction capability, using a decoder for affine Reed-Solomon (RS) codes. Then, in Subsection \ref{ss:prm2} we explain how the algorithm works for the case of PRM codes over the projective plane. This already features most of the aspects of the general recursive algorithm, but it is easier to grasp and serves as an intermediate step between PRS codes and the general case of PRM codes. We also provide examples to help understand some of the underlying problems we have to solve to successfully decode PRM codes. In Subsection \ref{ss:PRMgeneral}, we give the recursive decoding algorithm, Algorithm \ref{alg:1}, and we prove that it can decode any number of errors lower than $\eta_d(m)/2$. By the definition of $\eta_d(m)$, this means that we can always decode at least the same number of errors as the algorithm presented in \cite{decodingRMP}, but in general, we will be able to correct more. In fact, we study for which degrees Algorithm \ref{alg:1} can decode up to the error correction capability of PRM codes. This is done in Section \ref{s:analysis}, where we also consider an improved algorithm, Algorithm \ref{alg:2}, which can additionally decode some particular error patterns even if the weight of the error exceeds the aforementioned error correction capability of Algorithm \ref{alg:1}. In Section \ref{s:complexity} we compute the order of complexity of Algorithms \ref{alg:1} and \ref{alg:2}, which is the same as the complexity order of the chosen decoder for RM codes. Finally, in Section \ref{s:conclusion} we mention some future avenues of research.

\section{Preliminaries}\label{s:preliminaries}
Let $\fq$ be the finite field of $q$ and let $m$ be a positive integer. We consider the projective space $\mathbb{P}^m$ over $\fq$, and we denote by $p_j$ the number of points in $\mathbb{P}^j$, i.e., $p_j=\abs{\mathbb{P}^j}=\frac{q^{j+1}-1}{q-1}$. Throughout this work, we will fix the standard representatives for $\mathbb{P}^m$, which are the representatives with the leftmost nonzero coordinate equal to 1. The set $P^m$ whose elements are these representatives can be regarded as a subset of the affine space:
$$
P^{m}:=\left(\{1\}\times \F_{q}^m\right) \cup \left(\{0\}\times \{1\}\times \F_{q}^{m-1}\right)\cup\cdots\cup \{(0,\dots,0,1)\}\subset\mathbb{A}^{m+1} .
$$
Note that we can construct $P^m$ recursively:
\begin{equation}\label{descomposicion}
P^m=\left(\{1\}\times \F_{q}^m\right) \cup \left( \{0\}\times P^{m-1}\right),
\end{equation}
where we use the convention $P^0:=\{1\}$. In what follows, we will assume a specific ordering of the points in $P^m$. First, let $\xi\in \fq$ be a primitive element. We order the elements of $\fq$ in terms of the powers of $\xi$, that is
$$
\fq=\{\xi^0,\xi^1,\xi^2,\dots,\xi^{q-2},0\}.
$$
By using Equation \ref{descomposicion}, we see that this also gives an ordering of the points of $P^1$ as follows:
$$
P^1=\{ (1,\xi^0),(1,\xi^1),(1,\xi^2),\dots,(1,\xi^{q-2}), (1,0), (0,1)\}.
$$
Now we notice that 
\begin{equation}\label{descomposicion2}
\F_{q}^m=P^{m-1}\cup \xi\cdot P^{m-1}\cup\cdots\cup \xi^{q-2}\cdot P^{m-1}\cup \{(0,\dots,0)\}.
\end{equation}
Indeed, given a point $Q$ in $\F_{q}^m\setminus \{(0,\dots,0\}$, its leftmost nonzero coordinate is equal to $\xi^r$ for some $0\leq r\leq q-2$, which implies that $Q\in \xi^r \cdot P^{m-1}$. In particular, we have
$$
\fq^2=P^1\cup \xi\cdot P^1 \cup \xi^2\cdot P^1\cdots \cup \xi^{q-2}\cdot P^1\cup \{(0,0)\}.
$$
Thus, the ordering given to $\fq$ determines a particular ordering of the points of $P^1$, which also determines an ordering of the points of $\fq^2$. Recursively, we fix the ordering of the elements of $P^j$ and $\fq^j$ from the ordering we have given for $\fq$, for any $j$, and we will always assume that we are using these orderings in what follows. This is particularly important for practical implementations of the algorithms presented in this work, since they are recursive and they assume the ordering of the points is compatible with that recursion. 

\begin{ex}\label{ex:P2}
Let $q=4$ and $\F_4=\{1,a,a+1,0\}$, with $a^2=a+1$. This gives the ordering of the points (choosing $\xi=a$ as primitive element):
$$
\begin{aligned}
P^2=\{(1, 1, 1),
 (1, 1, a),
 (1, 1, a + 1),
 (1, 1, 0),
 (1, 0, 1),
 (1, a, a),
 (1, a, a + 1),
 (1, a, 1),
 (1, a, 0), \\
 (1, 0, a),
 (1, a + 1, a + 1),
 (1, a + 1, 1),
 (1, a + 1, a),
 (1, a + 1, 0),
 (1, 0, a + 1),
 (1, 0, 0), \\
 (0, 1, 1),
 (0, 1, a),
 (0, 1, a + 1),
 (0, 1, 0),
 (0, 0, 1)\}.
\end{aligned}
$$
\end{ex}

Fix $m>0$. For $j\leq m$, we define the evaluation map
$$
\evp{j}:\fq[x_{m-j},\dots,x_m] \rightarrow \fq^{p_j},\:\: f\mapsto \left(f(Q_1),\dots,f(Q_n)\right)_{Q_i \in P^j}.
$$
Let $d$ be a positive integer, and let $\fq[x_0,\dots,x_m]_d$ be the set of homogeneous polynomials of degree $d$. Then $\PRM_d(q,j):=\evp{j}(\fq[x_{m-j},\dots,x_m]_d)$ is the PRM code of degree $d$ over $P^j$. We may use the notation $\PRM_d(j)$ if there is no confusion about the field. We have introduced the definition of $\evp{j}$ and $\PRM_d(j)$ because it will be useful for the decoding algorithm, but we will mainly focus on $\PRM_d(m)$ in what follows. For $m=1$, we obtain PRS codes (sometimes called doubly extended Reed-Solomon codes), which are MDS codes with parameters $[q+1,d+1,q-d+1]$. We use the notation $\PRS_d=\PRM_d(1)$. The following result from \cite{sorensen} gives the basic parameters of these codes (also see \cite{ghorpadeMinimumweightPRM,sorensenGapMinD} for the minimum distance).

\begin{thm}\label{T:paramPRM}
The code $\PRM_d(m)$, for $1\leq d\leq m(q-1)$, is an $[n,k]$-code with 
$$
\begin{aligned}
&n=\frac{q^{m+1}-1}{q-1},\\
&k=\sum_{t\equiv d\bmod q-1,0<t\leq d}\left( \sum_{j=0}^{m+1}(-1)^j\binom{m+1}{j}\binom{t-jq+m}{t-jq}   \right).\\
\end{aligned}
$$
For the minimum distance, we have
$$
\wt(\PRM_d(m))=(q-\mu )q^{m-\nu-1}, \text{ where } \;
d-1=\nu(q-1)+\mu, \;0\leq \mu <q-1.
$$
\end{thm}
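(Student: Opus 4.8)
This is a restatement of known results from \cite{sorensen} (and \cite{ghorpadeMinimumweightPRM,sorensenGapMinD} for the minimum distance), so the plan is to indicate how each parameter is obtained rather than to reprove everything from scratch.

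The length $n=\frac{q^{m+1}-1}{q-1}=p_m$ is immediate: by definition $\PRM_d(m)$ is the image of $\evp{m}$, whose codomain is $\fq^{p_m}$, and one checks that the evaluation map is injective on the relevant space for $1\le d\le m(q-1)$, so no coordinate is redundant. For the dimension $k$, the plan is to compute the dimension of the image of $\evp{m}$ restricted to $\fq[x_0,\dots,x_m]_d$. The kernel of this map is the ideal of homogeneous degree-$d$ forms vanishing on all of $P^m$, which is generated by the relations $x_i^q-x_i x_j^{q-1}$ coming from the fact that every point of $\mathbb{P}^m(\fq)$ satisfies $x_i^q/x_i=x_j^q/x_j$ whenever $x_i,x_j\neq 0$. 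Concretely, $\PRM_d(m)$ is isomorphic as a vector space to the degree-$d$ graded piece of the quotient $\fq[x_0,\dots,x_m]/I$, where $I$ is the ideal whose zero set is $P^m$; a Hilbert-function / inclusion--exclusion count over the coordinate subspaces where various $x_i$ vanish yields exactly the stated alternating binomial sum, with the outer sum over $t\equiv d \bmod q-1$ accounting for the fact that forms of degrees $t$ and $t+(q-1)$ with $t\le d$ all contribute (evaluation only sees the degree modulo $q-1$, up to the cutoff $d$). This is the computation carried out in \cite{sorensen}.

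For the minimum distance, write $d-1=\nu(q-1)+\mu$ with $0\le\mu<q-1$, so that $1\le d\le m(q-1)$ forces $0\le \nu\le m-1$. The plan is the standard two-sided argument. For the upper bound $\wt(\PRM_d(m))\le (q-\mu)q^{m-\nu-1}$, one exhibits an explicit form of degree $d$ whose zero locus in $\mathbb{P}^m$ is as large as possible: take a product of $\nu$ distinct linear forms in one variable (each killing a hyperplane's worth of points, i.e.\ removing $q^{m-1},q^{m-2},\dots$) together with one more factor of degree $\mu$ in a fresh variable (a product of $\mu$ affine-type linear forms missing $q-\mu$ parallel hyperplanes), and count the complement of its zero set. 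For the lower bound, one argues by induction on $m$ using the hyperplane decomposition $\mathbb{P}^m = \mathbb{A}^m \sqcup \mathbb{P}^{m-1}$: a degree-$d$ form $f$ either vanishes identically on the hyperplane at infinity, in which case its restriction to $\mathbb{A}^m$ is a nonzero polynomial of degree $\le d$ and the Schwartz--Zippel / Ore bound controls its affine zeros, or it does not, in which case its restriction to $\mathbb{P}^{m-1}$ is a nonzero degree-$d$ form and the inductive hypothesis applies to that restriction while the affine part is again bounded; comparing the two cases shows the minimum of the number of nonzeros is the claimed value. The base case $m=1$ is the MDS property of $\PRS_d$, namely $\wt = q-d+1 = q-\mu$ when $\nu=0$.

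The main obstacle is the minimum-distance lower bound: the upper-bound construction and the length are routine, and the dimension formula is a bookkeeping exercise in inclusion--exclusion, but showing that \emph{no} degree-$d$ form can have more zeros than the extremal one requires the careful inductive case analysis above (this is exactly where \cite{sorensen} needed work, and where \cite{ghorpadeMinimumweightPRM} later supplied the enumeration of all minimum-weight codewords). Since the statement is quoted rather than newly proved here, it suffices to cite these sources for the lower bound and simply record the resulting parameters.
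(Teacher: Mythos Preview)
Your proposal is appropriate: the paper does not prove Theorem~\ref{T:paramPRM} at all but simply quotes it as a known result from \cite{sorensen} (with \cite{ghorpadeMinimumweightPRM,sorensenGapMinD} for the minimum distance), exactly as you identify in your opening line. Your added sketch of the length, dimension, and two-sided minimum-distance argument is consistent with those sources and goes beyond what the paper itself records.
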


Let $d>0$ and let $I(\mathbb{P}^m)$ be the vanishing ideal of $\mathbb{P}^m$, i.e., the ideal generated by the homogeneous polynomials that vanish at all the points of $\PM$. From \cite{mercier} we have that
$$
I(\mathbb{P}^m)=\langle \{x_i^qx_j-x_ix_j^q,\; 0\leq i <j\leq m \} \rangle.
$$
From the definition of $\evp{m}$ and $\PRM_d(m)$, it is clear that $\fq[x_0,\dots,x_m]_d/I(\mathbb{P}^m)_d\cong \PRM_d(m)$. Therefore, we can regard the vectors of $\PRM_d(m)$ as classes of polynomials in $\fq[x_0,\dots,x_m]_d/I(\mathbb{P}^m)_d$. By fixing a representative for each class, we obtain a basis for $\fq[x_0,\dots,x_m]_d/I(\mathbb{P}^m)_d$, which also gives a basis for $\PRM_d(m)$ via $\evp{m}$. Let 
$$
M_d(j):=\{x^\alpha=x_{m-j}^{\alpha_{m-j}}\cdots x_m^{\alpha_m}:\abs{\alpha}=d,\alpha_{m-j}>0,0\leq \alpha_{m-i}\leq q-1, i<j\leq m\},
$$
for $j=0,1,\dots,m$, and $M_d:=\bigcup_{j=0}^m M_d(j)$. Then we have that the classes of the monomials of $M_d$ form a basis for $\fq[x_0,\dots,x_m]_d/I(\mathbb{P}^m)_d$ (see \cite{projectivefootprint}).

\begin{rem}\label{r:homogeneizar}
Let $d\geq 1$ and let $x^\alpha\in M_{d'}(j)$, for some $j\leq m$ and $d'<d$ with $d'\equiv d\bmod q-1$. Then $\alpha_{m-j}>0$ and
$$
x^{\alpha'}:=x_{m-j}^{\alpha_{m-j}+\lambda (q-1)}x_{m-j+1}^{\alpha_{m-j+1}}\cdots x_m^{\alpha_m} \in M_d(j),
$$
where $\lambda=(d-d')/(q-1)$. Note that we have $\evp{m}(x^\alpha)=\evp{m}(x^{\alpha'})$. This also proves that $\PRM_{d'}(m)\subset \PRM_d(m)$. 
\end{rem}

Let $j\leq m$. We will also use affine RM codes extensively, which we denote by $\RM_d(q,j)$ (or $\RM_d(j)$ if there is no confusion about the field). We consider the evaluation map
$$
\eva{j}:\fq[x_{m-j+1},\dots,x_m]\rightarrow \fq^{q^j},\:\: f\mapsto \left(f(Q_1),\dots,f(Q_{q^j})\right)_{Q_i \in \fq^j}.
$$
If we denote by $\fq[x_{m-j+1},\dots,x_m]_{\leq d}$ the polynomials of degree less than or equal to $d$, we have $\RM_d(j):=\eva{j}(\fq[x_{m-j+1},\dots,x_m]_{\leq d})$. As with PRM codes, we will mainly consider the code $\RM_d(m)$. For $m=1$ we obtain Reed-Solomon codes of length $q$, for which we use the notation $\RS_d=\RM_d(1)$. From \cite{delsarteRM,kasamiRM} we have the following result about the basic parameters of RM codes.

\begin{thm}\label{T:paramRM}
The code $\RM_d(m)$, for $0\leq d\leq m(q-1)$, is an $[n,k]$-code with 
$$
\begin{aligned}
&n=q^m,\\
&k=\sum_{t=0}^d \sum_{j=0}^m (-1)^j\binom{m}{j}\binom{t-jq+m-1}{t-jq}.\\
\end{aligned}
$$
For the minimum distance, we have
$$
\wt(\RM_d(m))=(q-\mu )q^{m-\nu-1}, \text{ where } \;
d=\nu(q-1)+\mu, \;0\leq \mu <q-1.
$$
\end{thm}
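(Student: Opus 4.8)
The length is immediate, since $\eva{m}$ evaluates at all $q^m$ points of $\fq^m$, so $n = q^m$. For the dimension, the plan is to describe the kernel of $\eva{m}$ and reduce to a lattice point count. The vanishing ideal of $\fq^m$ in $\fq[x_1,\dots,x_m]$ is $\langle x_i^q - x_i : 1 \le i \le m \rangle$, and since replacing $x_i^q$ by $x_i$ strictly lowers the degree, reducing an arbitrary polynomial of degree $\le d$ modulo this ideal produces an $\fq$-linear combination of the \emph{reduced} monomials $x^\alpha$ with $0 \le \alpha_i \le q-1$ for all $i$ and $|\alpha| \le d$. These reduced monomials are $\fq$-linearly independent modulo the vanishing ideal (the full collection of $q^m$ reduced monomials is the standard monomial basis of $\fq[x_1,\dots,x_m]/\langle x_i^q - x_i : 1\le i\le m\rangle \cong \fq^{q^m}$), so they form a basis of $\RM_d(m)$ and $k = \#\{\alpha \in \Z_{\ge 0}^m : |\alpha| \le d,\ \alpha_i \le q-1 \text{ for all } i\}$.

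To evaluate this count I would group by total degree and apply inclusion--exclusion. For a fixed $t$, the number of $\alpha \in \Z_{\ge 0}^m$ with $|\alpha| = t$ is $\binom{t+m-1}{t}$; subtracting, for each $j$-element set of coordinates forced to satisfy $\alpha_i \ge q$, the count $\binom{t-jq+m-1}{t-jq}$ coming from the shift $\alpha_i \mapsto \alpha_i - q$ (with the convention that a binomial coefficient with negative bottom argument vanishes), inclusion--exclusion gives $\sum_{j=0}^m (-1)^j \binom{m}{j}\binom{t-jq+m-1}{t-jq}$ reduced exponent vectors of total degree $t$. Summing over $0 \le t \le d$ yields the stated formula for $k$; as a check, when $d = m(q-1)$ it collapses to $k = q^m = n$, as it must since then $\RM_d(m) = \fq^{q^m}$.

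For the minimum distance, write $d = \nu(q-1) + \mu$ with $0 \le \mu < q-1$, so $0 \le \nu \le m$. The bound $\wt(\RM_d(m)) \le (q-\mu)q^{m-\nu-1}$ is witnessed by the explicit codeword
$$ \eva{m}\!\left(\ \prod_{i=1}^{\nu}\ \prod_{a \in \fq \setminus \{0\}} (x_i - a)\ \cdot\ \prod_{b \in T} (x_{\nu+1} - b)\ \right), $$
where $T \subseteq \fq$ with $|T| = \mu$ and the last product is omitted if $\nu = m$: the underlying polynomial is reduced of degree exactly $d$, and it is nonzero precisely at the points whose first $\nu$ coordinates all vanish, whose $(\nu+1)$-st coordinate avoids $T$, and whose last $m-\nu-1$ coordinates are arbitrary, i.e. at $(q-\mu)q^{m-\nu-1}$ points.

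The crux is the matching lower bound: every nonzero reduced polynomial $g$ with $\deg g \le d$ has at least $(q-\mu)q^{m-\nu-1}$ non-zeros on $\fq^m$. I would prove this by induction on $m$, the base case $m = 1$ being the fact that a nonzero univariate polynomial of degree $\le d \le q-1$ has at most $d$ roots. For the inductive step, write $g = \sum_{i=0}^{r} g_i(x_1,\dots,x_{m-1})\, x_m^i$ with $g_r \ne 0$ and $r \le q-1$; over each point of $\fq^{m-1}$ where $g_r$ does not vanish, $g$ restricts to a degree-$r$ polynomial in $x_m$ and hence has at least $q-r$ non-zeros on the fibre, while the induction hypothesis applied to $g_r$ (reduced, of degree at most $\min\{d-r,(m-1)(q-1)\}$) bounds below the number of such fibres; combining these counts, and splitting into cases according to how $r$ compares with $q-1-\mu$, one recovers the claimed bound, the extremal case being exactly the polynomial displayed above. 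Keeping the degree truncations straight and checking which case is the worst is where I expect the real work to lie. A cleaner alternative worth pursuing is to deduce the lower bound from the footprint bound for the ideal $I(\fq^m) + \langle g\rangle$ under a suitably chosen monomial order, which compresses the induction into a single estimate on the size of a footprint.
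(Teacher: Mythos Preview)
The paper does not prove this theorem; it is quoted as a known result from \cite{delsarteRM,kasamiRM}. Your sketch is a correct outline of the standard argument: the dimension count via the reduced-monomial basis and inclusion--exclusion is exactly right, the explicit product polynomial gives the upper bound on the minimum distance, and the inductive lower bound via expansion in $x_m$ is the classical route. The case split you anticipate does go through: writing $d=\nu(q-1)+\mu$ and $r=\deg_{x_m} g$, one checks the elementary inequalities $(q-r)(q-\mu+r)\ge q(q-\mu)$ when $0\le r\le\mu$ and $(q-r)(r+1-\mu)\ge q-\mu$ when $\mu<r\le q-1$; the only remaining wrinkle is to replace $d-r$ by $(m-1)(q-1)$ in the induction hypothesis whenever the former exceeds the latter, which is harmless since the minimum distance is non-increasing in the degree parameter.
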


\begin{rem}\label{r:dminRMYPRM}
As a consequence of Theorems \ref{T:paramPRM} and \ref{T:paramRM}, we have
$$
\wt(\PRM_d(m))=\wt(\RM_{d-1}(m)). 
$$
\end{rem}

Similarly to PRM codes, we have the isomorphism $\fq[x_1,\dots,x_m]_{\leq d}/I(\A^m)\cong \RM_d(m)$, where
$$
I(\A^m)=\langle x_i^q-x_i,\; 1\leq i \leq m \rangle.
$$
For $1\leq j\leq m$, let
$$
A(j):=\{x^\alpha \in \fq[x_{m-j+1},\dots,x_m] : 0\leq \alpha_i\leq q-1,\; 1\leq i \leq m\}.
$$
Then every class in $\fq[x_1,\dots,x_m]/I(\A^m)$ has one representative (and only one) in $A(m)$. Thus, whenever we consider a polynomial $f\in \fq[x_1,\dots,x_m]$, we denote by $f \bmod I(\A^m)$ its representative in $\fq[x_1,\dots,x_m]/I(\A^m)$ whose monomials are contained in $A(m)$. We say that $f$ is reduced modulo $I(\A^m)$ if it is expressed in terms of the monomials in $A(m)$. We can now give the following result, which is related to Remark \ref{r:homogeneizar} and will be used later. We use the following notation: given a monomial $x^\alpha=x_0^{\alpha_0}\cdots x_m^{\alpha_m}\in \fq[x_0,\dots,x_m]$, we denote by $x^\alpha(1,x_1,\dots,x_m)$ the monomial $x_1^{\alpha_1}\cdots x_m^{\alpha_m}$. 

\begin{lem}\label{l:monomiosenA}
Let $d\geq q$ and let $x^\alpha\in M_d(a),x^\beta \in M_d(b)$ be distinct monomials with $a\geq b$ and assume we have $x^\alpha(1,x_1,\dots,x_m)\equiv x^\beta(1,x_1,\dots,x_m)\bmod I(\A^m)$. Then $x_0\mid x^\alpha$ and $x_0\nmid x^\beta$, and there exists a monomial $x^{\beta'}\in M_{d-(q-1)}$ such that $\evp{m}(x^\beta)=\evp{m}(x^{\beta'})$.
\end{lem}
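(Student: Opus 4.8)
The plan is to make everything explicit at the level of monomials via the single‑variable reduction modulo $x_i^q-x_i$. For an integer $e\geq 0$ let $\rho(e)$ be the exponent one gets after reducing $x_i^{e}$ modulo $x_i^q-x_i$: so $\rho(0)=0$, and for $e\geq 1$ the value $\rho(e)$ is the unique integer in $\{1,\dots,q-1\}$ with $\rho(e)\equiv e \bmod (q-1)$; in particular $\rho(e)\leq e$ always, with equality iff $e\leq q-1$, and $e-\rho(e)$ is a nonnegative multiple of $q-1$. Writing $x^\alpha=x_0^{\alpha_0}\cdots x_m^{\alpha_m}$ and similarly for $x^\beta$, substituting $x_0=1$ and reducing gives $x^\alpha(1,x_1,\dots,x_m)\bmod I(\A^m)=x_1^{\rho(\alpha_1)}\cdots x_m^{\rho(\alpha_m)}$, and likewise for $\beta$. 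Hence the hypothesis is \emph{equivalent} to the system $\rho(\alpha_i)=\rho(\beta_i)$ for all $1\leq i\leq m$. This reformulation is the heart of the matter; the rest is bookkeeping with the structure of $M_d(a)$ and $M_d(b)$.

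Let $\ell:=m-a$ and $\ell':=m-b$ be the indices of the leftmost variables occurring in $x^\alpha$ and $x^\beta$, so $\ell\leq \ell'$ since $a\geq b$; by definition of $M_d$ we have $\alpha_\ell\geq 1$, $\alpha_i=0$ for $i<\ell$, $\alpha_i\leq q-1$ for $i>\ell$, and the analogous statements for $\beta$ at $\ell'$. First I would rule out $\ell=\ell'$: then both monomials are supported on $\{x_\ell,\dots,x_m\}$, the exponents at indices $\neq\ell$ are $0$ or at most $q-1$, so $\rho$ is the identity on them and the equalities force $\alpha_i=\beta_i$ for $i\neq\ell$; comparing total degrees then gives $\alpha_\ell=\beta_\ell$, contradicting $x^\alpha\neq x^\beta$. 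So $\ell<\ell'$, which forces $\beta_\ell=0$, i.e.\ $\rho(\beta_\ell)=0$. If $\ell\geq 1$, this contradicts $\rho(\alpha_\ell)=\rho(\beta_\ell)$ because $\alpha_\ell\geq 1$ gives $\rho(\alpha_\ell)\geq 1$. Therefore $\ell=0$, that is $a=m$ and $x_0\mid x^\alpha$; and $\ell'>0$ gives $b<m$, hence $\beta_0=0$ and $x_0\nmid x^\beta$. This settles the first two assertions.

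For the last assertion, note that $a=m$ forces $\alpha_i\leq q-1$ for every $i\geq 1$, so $\alpha_i=\rho(\alpha_i)=\rho(\beta_i)$ for $1\leq i\leq m$. Using $\beta_0=0$ we get $\sum_{i=1}^m\beta_i=d$, and since $\rho(\beta_i)=\beta_i$ for all $i\neq\ell'$ (those exponents being $0$ or $\leq q-1$), summing yields $\alpha_0=d-\sum_{i=1}^m\alpha_i=\beta_{\ell'}-\rho(\beta_{\ell'})$. The right‑hand side is a multiple of $q-1$ and equals $\alpha_0\geq 1$, hence is at least $q-1$; together with $\rho(\beta_{\ell'})\geq 1$ this gives $\beta_{\ell'}\geq q$, so in particular $\beta_{\ell'}-(q-1)\geq 1$. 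Then I would set $x^{\beta'}:=x_{\ell'}^{\beta_{\ell'}-(q-1)}x_{\ell'+1}^{\beta_{\ell'+1}}\cdots x_m^{\beta_m}$: it has degree $d-(q-1)\geq 1$ (since $d\geq q$), positive leftmost exponent, and the same remaining exponents as $x^\beta$ (hence $\leq q-1$), so $x^{\beta'}\in M_{d-(q-1)}(b)\subseteq M_{d-(q-1)}$. Finally $x^\beta=x_{\ell'}^{q-1}x^{\beta'}$, and for each point $Q$ of $P^m$ either $x_{\ell'}(Q)\in\fq^{*}$, whence $x_{\ell'}(Q)^{q-1}=1$ and $x^\beta(Q)=x^{\beta'}(Q)$, or $x_{\ell'}(Q)=0$, whence $x^\beta(Q)=x^{\beta'}(Q)=0$ because $x_{\ell'}$ divides both; thus $\evp{m}(x^\beta)=\evp{m}(x^{\beta'})$.

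I expect the only real subtlety — the ``main obstacle'' — to be the bookkeeping: using the precise shape of $M_d(a)$ and $M_d(b)$ to run the case distinction on the positions $\ell,\ell'$ of the leftmost variables, and to pin down the identity $\alpha_0=\beta_{\ell'}-\rho(\beta_{\ell'})$, which is exactly what forces $\beta_{\ell'}\geq q$ and makes $x^{\beta'}$ well defined and of the correct degree. There is no analytic difficulty; once the problem is rephrased through $\rho$, the argument is a finite case check plus a degree count.
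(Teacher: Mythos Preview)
Your proof is correct and follows essentially the same case-by-case strategy as the paper, only organised by the position $\ell,\ell'$ of the leftmost variable rather than by $x_0$-divisibility, and made fully explicit via the reduction map $\rho$. The one substantive difference is that where the paper disposes of the ``neither divisible by $x_0$'' case by invoking the basis property of $M_d$ under $\evp{m}$, you argue directly that the monomials coincide; your route also yields the precise identity $\alpha_0=\beta_{\ell'}-\rho(\beta_{\ell'})$, which pins down exactly which exponent of $x^\beta$ exceeds $q-1$.
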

\begin{proof}
If $x_0\nmid x^\alpha$ and $x_0\nmid x^\beta$, we have that $x^\alpha,x^\beta\in \fq[x_1,\dots,x_m]_d$ and $\eva{m}(x^\alpha)=\eva{m}(x^\beta)$. We also have $\evp{m-1}(x^\alpha)=\evp{m-1}(x^\beta)$ (we can see $P^{m-1}\subset \fq^m$ for this case). Thus, $\evp{m}(x^\alpha)=\evp{m}(x^\beta)$, a contradiction since they are distinct elements of $M_d$. 

If $x_0\mid x^\alpha$ and $x_0\mid x^\beta$, then $a=b=m$, and $x^\alpha(1,x_1,\dots,x_m)\equiv x^\beta(1,x_1,\dots,x_m)\bmod I(\A^m)$ implies that all the powers of the variables of $x^\alpha$ and $x^\beta$, different from $x_0$, are the same. Since they are of the same degree, this means $x^\alpha=x^\beta$, a contradiction. 

If $x_0\mid x^\alpha$ and $x_0\nmid x^\beta$ (recall $a\geq b$), we have that $x^\beta(1,x_1,\dots,x_m)=x^\beta$ is of degree $d$, and $x^\alpha(1,x_1,\dots,x_m)$ is of degree $<d$. Thus, in order to have 
$$
x^\alpha \equiv x^\beta(1,x_1,\dots,x_m) \bmod I(\A^m),
$$
the reduction of $x^\beta$ modulo $I(\A^m)$ has to be different from $x^\beta$, which implies that the degree of at least one variable in $x^\beta$ is higher than $q-1$. Then
$$
x^{\beta'}=x_{m-b}^{\beta_{m-b}-(q-1)}\cdots x_m^{\beta_m}
$$
satisfies the properties of the statement (see Remark \ref{r:homogeneizar}). 
\end{proof}

We can obtain polynomials in $\fq[x_{m-j},\dots,x_m]_{d}$ using polynomials in $\fq[x_{m-j+1},\dots,x_m]_{\leq d}$ via homogenization. We define
$$
h_d^j:\fq[x_{m-j+1},\dots,x_m]_{\leq d} \rightarrow \fq[x_{m-j},\dots,x_m]_{d}, \; f\mapsto x_{m-j}^{d}f\left(\frac{x_{m-j+1}}{x_{m-j}},\dots,\frac{x_{m}}{x_{m-j}}\right).
$$
Note that $h_d^j(A(j))\subset M_d(j)$. Now we introduce additional notation related to homogenizations, which will be needed for Algorithm \ref{alg:1}. 

\begin{defn}\label{def:badmonomials}
Let $1\leq d \leq m(q-1)$. Let $x^\alpha\in \fq[x_1,\dots,x_m]$ be a monomial. We say that $x^\alpha$ is \textit{bad} if $0<\abs{\alpha}<d$ and $\abs{\alpha} \equiv d\bmod q-1$. We say that $x^\alpha$ is \textit{good} if it is not bad. Given a polynomial $f \in \fq[x_1,\dots,x_m]_{\leq d}$, we denote by $\fbad$ the sum of the terms of $f$ corresponding to bad monomials. Let $\fgood=f-\fbad$. We denote by $(\fgood)_d$ and $(\fgood)_{\leq d-1}$ the sum of the terms of degree $d$ and degree less than or equal to $d-1$ of $\fgood$, respectively.
\end{defn}

If $d\leq q-1$ and $f \in \fq[x_1,\dots,x_m]_{\leq d}$, it is clear that $\fbad=0$. For $d\geq q$, the notion of bad monomials is related to having several possible homogenizations. This will be illustrated in Examples \ref{ex:differenthom} and \ref{ex:differenthom2}.

From \cite{sanjoseRecursivePRM} we have a recursive construction for PRM codes, which also involves RM codes. Note that this construction relies on the ordering of the elements of $P^m$ and $\fq^m$ that we have chosen. 

\begin{thm}\label{T:recursive}
Let $1\leq d\leq m(q-1)$ and let $\xi$ be a primitive element in $\F_{q}$. We have the following recursive construction:
{\small
$$
\PRM_d(m)=\{(u+v_{\xi,d},v)\mid  u\in \RM_{d-1}(m), v\in \PRM_d(m-1)\},
$$}
where $v_{\xi,d}:=v\times \xi^d v\times \cdots\times \xi^{(q-2)d}v\times \{0\}  =(v,\xi^d v,\xi^{2d}v,\dots,\xi^{(q-2)d}v,0)$.
\end{thm}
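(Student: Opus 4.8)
The plan is to prove the two inclusions separately; in both directions the only ingredient beyond bookkeeping is the homogeneity of the degree-$d$ forms being evaluated. First I would recall, from Equations (\ref{descomposicion}) and (\ref{descomposicion2}) and from the way the orderings of $P^m$ and $\fq^m$ were built from the fixed ordering of $\fq$, that the points of $P^m$ are listed with the block $\{1\}\times\fq^m$ first and the block $\{0\}\times P^{m-1}$ last, and that the first block is in turn ordered as $\{1\}\times P^{m-1}$, $\{1\}\times\xi P^{m-1}$, $\dots$, $\{1\}\times\xi^{q-2}P^{m-1}$, $\{1\}\times\{0\}$. This is exactly the shape of the right-hand side of the claimed identity; in particular it makes the lengths match, since $\abs{v_{\xi,d}}=(q-1)p_{m-1}+1=q^m=\abs{u}$ and $q^m+p_{m-1}=p_m$.

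For the inclusion ``$\subseteq$'', I would take an arbitrary homogeneous $F\in\fq[x_0,\dots,x_m]_d$ and consider its dehomogenization $f:=F(1,x_1,\dots,x_m)$ together with $g:=F(0,x_1,\dots,x_m)$. Here $g\in\fq[x_1,\dots,x_m]_d$ is homogeneous of degree $d$ and is precisely the degree-$d$ component of $f$, so $f-g$ has degree at most $d-1$. The natural candidates are then $v:=\evp{m-1}(g)\in\PRM_d(m-1)$ and $u:=\eva{m}(f-g)\in\RM_{d-1}(m)$, and it remains to check the three kinds of coordinates of $\evp{m}(F)$: on $\{0\}\times P^{m-1}$ we get $F(0,Q)=g(Q)$, which is the block $v$; at $(1,0)$ we get $f(0)=g(0)=0$ (using $d\ge 1$), matching the trailing zero of $v_{\xi,d}$; and on $\{1\}\times\xi^rP^{m-1}$ we get $f(\xi^rQ)=(f-g)(\xi^rQ)+g(\xi^rQ)$, where the first summand reproduces the corresponding entry of $u$ and, by homogeneity of $g$, $g(\xi^rQ)=\xi^{rd}g(Q)$ reproduces the corresponding entry of $\xi^{rd}v$, i.e. of $v_{\xi,d}$. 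Adding these contributions gives $\evp{m}(F)=(u+v_{\xi,d},v)$.

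For the inclusion ``$\supseteq$'', given $u\in\RM_{d-1}(m)$ and $v\in\PRM_d(m-1)$ I would choose $\tilde f\in\fq[x_1,\dots,x_m]_{\le d-1}$ with $\eva{m}(\tilde f)=u$ and $g\in\fq[x_1,\dots,x_m]_d$ with $\evp{m-1}(g)=v$, and set $F:=x_0^{d}\tilde f(x_1/x_0,\dots,x_m/x_0)+g\in\fq[x_0,\dots,x_m]_d$. Since $\deg\tilde f\le d-1$, the first summand is divisible by $x_0$, so $F(0,x_1,\dots,x_m)=g$ and $F(1,x_1,\dots,x_m)=\tilde f+g$; running the same coordinatewise computation as above yields $\evp{m}(F)=(u+v_{\xi,d},v)$, completing the argument.

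I expect the only delicate point to be the matching of the two orderings: one must be sure that the sub-block of $\fq^m$ indexed, in the fixed ordering, by $\xi^rP^{m-1}$ really is $\xi^r$ times the sub-block indexed by $P^{m-1}$ \emph{in the same order}, so that $g$ evaluated on it is literally $\xi^{rd}v$ and not a permutation of it. This is precisely what the recursive construction of the ordering guarantees through (\ref{descomposicion}) and (\ref{descomposicion2}), and I would make that dependence explicit in the write-up; the rest is a routine verification.
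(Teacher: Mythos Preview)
Your argument is correct. Note, however, that the paper does not provide its own proof of this theorem: it is quoted from \cite{sanjoseRecursivePRM} and used as input for the decoding algorithm. Your two-inclusion argument, splitting a degree-$d$ form $F$ into its $x_0$-free part $g$ (giving $v$ and, by homogeneity, the scaled blocks $\xi^{rd}v$) and the dehomogenized remainder $f-g$ of degree at most $d-1$ (giving $u$), is exactly the natural direct proof and is essentially how the result is established in the cited reference. The one point you flag as delicate---that the ordering of $\fq^m$ really lists $\xi^r P^{m-1}$ as $\xi^r$ times the ordered list $P^{m-1}$---is precisely the content of the recursive ordering built from (\ref{descomposicion}) and (\ref{descomposicion2}), so your write-up is complete once that is made explicit.
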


As a particular case of the implications of the recursive construction for the generalized Hamming weights of PRM codes given in \cite[Thm. 7]{sanjoseRecursivePRM}, this construction gives a bound $\eta_d(m)$ for the minimum distance of PRM codes, which will determine the error correction capability of the recursive decoding algorithm that we will develop in what follows. Let $d-1=\nu (q-1)+\mu$, with $0\leq \mu <q-1$. We denote
\begin{equation}  
\begin{aligned}\label{eq:eta0}
\etadm:= \sum_{i=0}^{m-\nu-1} \wt(\RM_d(m-i))+1
\end{aligned}
\end{equation}

\begin{lem}\label{l:eta0}
Let $d-1=\nu (q-1)+\mu$, with $1\leq d\leq m(q-1)$ and $0\leq \mu <q-1$. Then
$$
\etadm=(q-\mu)q^{m-\nu -1}-\mu \frac{q^{m-\nu-1}-1}{q-1}=\wt(\PRM_d(m))-\mu \frac{q^{m-\nu-1}-1}{q-1}.
$$
\end{lem}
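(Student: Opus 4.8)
The plan is to evaluate both sides as explicit functions of $q,\mu,\nu,m$ and then reduce to an elementary identity. Keep the notation $d-1=\nu(q-1)+\mu$, $0\le\mu<q-1$, from the statement. The first step is to read off $\wt(\RM_d(m-i))$ from Theorem \ref{T:paramRM}: for this I need the Euclidean division of $d$ (not $d-1$) by $q-1$, which is $d=\nu(q-1)+(\mu+1)$ when $\mu<q-2$ and $d=(\nu+1)(q-1)$ when $\mu=q-2$. In either case Theorem \ref{T:paramRM} gives, for every $m'$ with $\nu+1\le m'\le m$,
$$
\wt(\RM_d(m'))=(q-1-\mu)\,q^{\,m'-\nu-1}
$$
(in the case $\mu=q-2$ the theorem yields $q\cdot q^{\,m'-\nu-2}=q^{\,m'-\nu-1}$, which agrees because then $q-1-\mu=1$). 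I also need to check that every code $\RM_d(m-i)$ occurring in the sum \eqref{eq:eta0} is in the admissible range: since $d\le m(q-1)$ we have $\nu\le m-1$, so $m-\nu-1\ge 0$ and the sum is non-empty, and $d=\nu(q-1)+\mu+1\le(\nu+1)(q-1)\le(m-i)(q-1)$ for $0\le i\le m-\nu-1$.

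Substituting the displayed formula into the definition \eqref{eq:eta0} collapses the sum to a geometric series,
$$
\etadm=(q-1-\mu)\sum_{i=0}^{m-\nu-1}q^{\,m-\nu-1-i}+1=(q-1-\mu)\,\frac{q^{m-\nu}-1}{q-1}+1 .
$$
It then remains to prove the algebraic identity
$$
(q-1-\mu)\,\frac{q^{m-\nu}-1}{q-1}+1=(q-\mu)\,q^{m-\nu-1}-\mu\,\frac{q^{m-\nu-1}-1}{q-1}.
$$
The cleanest way is to abbreviate $S:=\frac{q^{m-\nu-1}-1}{q-1}$, note that $q^{m-\nu-1}=(q-1)S+1$ and, using $q^{m-\nu}-1=q(q^{m-\nu-1}-1)+(q-1)$, that $\frac{q^{m-\nu}-1}{q-1}=qS+1$; expanding both sides in $S$ then reduces each of them to $q(q-1-\mu)S+(q-\mu)$, proving the first equality of the lemma. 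For the second equality one only observes that $\wt(\PRM_d(m))=(q-\mu)q^{m-\nu-1}$ is precisely Theorem \ref{T:paramPRM} for the very decomposition $d-1=\nu(q-1)+\mu$ already in use.

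There is no genuine difficulty; the only two places that need a little attention are the case distinction $\mu=q-2$ versus $\mu<q-2$ when extracting $\wt(\RM_d(m-i))$ (so that the single formula above holds uniformly), and confirming that each index $i$ in the sum keeps $d$ within the range where Theorem \ref{T:paramRM} applies.
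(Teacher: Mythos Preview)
Your proof is correct and follows essentially the same approach as the paper: both handle the case split $\mu<q-2$ versus $\mu=q-2$ to obtain the uniform formula $\wt(\RM_d(m-i))=(q-1-\mu)q^{m-\nu-1-i}$, then reduce to a geometric sum and elementary algebra. The only cosmetic difference is that the paper splits off the $i=0$ term before summing, whereas you sum everything first and then verify the resulting identity via the substitution $S=\frac{q^{m-\nu-1}-1}{q-1}$; both are equivalent bookkeeping.
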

Therefore,
$$
\wt(\PRM_d(m))\geq \etadm,
$$
and we have
$$
\wt(\PRM_d(m))= \etadm
$$
if and only if $\mu=0$ or $\nu=m-1$. 
\begin{proof}
If $\mu <q-2$, then $d=\nu(q-1)+(\mu+1)$, where $\mu+1<q-1$. Therefore, by Theorem \ref{T:paramRM} we have
$$
\begin{aligned}
\etadm&=\sum_{i=0}^{m-\nu-1}(q-\mu-1)q^{m-\nu-i-1}+1\\
&=(q-\mu)q^{m-\nu -1}+(q-\mu-1)\sum_{i=1}^{m-\nu-1}q^{m-\nu-i-1}-q^{m-\nu-1}+1\\
&=(q-\mu)q^{m-\nu -1}+(q-\mu-1)\frac{q^{m-\nu-1}-1}{q-1}-q^{m-\nu-1}+1\\
&=(q-\mu)q^{m-\nu -1}-\mu \frac{q^{m-\nu-1}-1}{q-1}=\wt(\PRM_d(m))-\mu \frac{q^{m-\nu-1}-1}{q-1},
\end{aligned}
$$
where the last equality follows from Theorem \ref{T:paramPRM}. If $\mu=q-2$, then $d=(\nu+1)(q-1)$. To use Theorem \ref{T:paramRM}, we consider $\nu'=\nu+1$ and $\mu'=0$. However, it is straightforward to check that the formula for the minimum distance of $\RM_d(m)$ gives the same value for $\nu'$ and $\mu'$ as it does for $\nu''=\nu$ and $\mu''=q-1$, i.e., both give the value $q^{m-\nu -1}$. Thus, the previous computation still holds in this case.
\end{proof}

\begin{cor}\label{c:etaq-1}
Let $q\leq d \leq m(q-1)$. We have that
$$
\etadm=\eta_{d-(q-1)}(m-1).
$$
\end{cor}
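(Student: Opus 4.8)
The plan is to reduce both sides of the claimed identity to the closed form provided by Lemma \ref{l:eta0} and check that they coincide. First I would set $d-1 = \nu(q-1)+\mu$ with $0\le\mu<q-1$, which is the Euclidean-division data attached to $\etadm$. The key observation is what happens to this data when we pass from $d$ to $d-(q-1)$: since $d\ge q$ we have $\nu\ge 1$, and $(d-(q-1))-1 = (\nu-1)(q-1)+\mu$, so the division data attached to $\eta_{d-(q-1)}(m-1)$ is $\nu' = \nu-1$, $\mu' = \mu$, together with the ambient dimension dropping from $m$ to $m-1$. Note also that the hypothesis $q\le d\le m(q-1)$ guarantees $1\le d-(q-1)\le (m-1)(q-1)$, so Lemma \ref{l:eta0} is applicable to the right-hand side as well.

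Next I would apply Lemma \ref{l:eta0} to each side. For the left-hand side,
$$
\etadm = (q-\mu)q^{m-\nu-1}-\mu\,\frac{q^{m-\nu-1}-1}{q-1}.
$$
For the right-hand side, substituting $m\mapsto m-1$, $\nu\mapsto\nu-1$, $\mu\mapsto\mu$ (so that $(m-1)-(\nu-1)-1 = m-\nu-1$, the same exponent),
$$
\eta_{d-(q-1)}(m-1) = (q-\mu)q^{(m-1)-(\nu-1)-1}-\mu\,\frac{q^{(m-1)-(\nu-1)-1}-1}{q-1} = (q-\mu)q^{m-\nu-1}-\mu\,\frac{q^{m-\nu-1}-1}{q-1}.
$$
The two expressions are literally identical, which finishes the proof.

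There is essentially no obstacle here; the only point requiring a little care is the bookkeeping of the division data, i.e., verifying that subtracting $q-1$ from $d$ decreases $\nu$ by exactly one while leaving $\mu$ unchanged, and that the two exponents $m-\nu-1$ and $(m-1)-(\nu-1)-1$ agree. One could alternatively give a more conceptual proof directly from the defining sum \eqref{eq:eta0}: by Remark \ref{r:dminRMYPRM} and Theorem \ref{T:paramRM}, $\wt(\RM_d(m-i))$ depends on $d$ and $m-i$ only through the ``slack'' quantities, and the shift $d\mapsto d-(q-1)$, $m\mapsto m-1$ reindexes the sum $\sum_{i=0}^{m-\nu-1}$ onto itself term by term; but the computation via Lemma \ref{l:eta0} is shorter and self-contained, so that is the route I would take.
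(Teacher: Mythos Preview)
Your proof is correct and follows essentially the same approach as the paper: both set up the division data $d-1=\nu(q-1)+\mu$, observe that passing to $d-(q-1)$ and $m-1$ replaces $(\nu,\mu,m)$ by $(\nu-1,\mu,m-1)$, and then invoke Lemma~\ref{l:eta0}. The only cosmetic difference is that the paper uses the second expression in Lemma~\ref{l:eta0} (the one in terms of $\wt(\PRM_d(m))$) and then appeals to Theorem~\ref{T:paramPRM} to identify $\wt(\PRM_{d-(q-1)}(m-1))=\wt(\PRM_d(m))$, whereas you use the first explicit expression $(q-\mu)q^{m-\nu-1}-\mu\frac{q^{m-\nu-1}-1}{q-1}$ directly and simply check that the exponent $(m-1)-(\nu-1)-1$ equals $m-\nu-1$; your route is marginally more self-contained.
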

\begin{proof}
Let $d-1=\nu(q-1)+\mu$, with $0\leq \mu \leq q-1$. Note that $d-(q-1)-1=(\nu-1)(q-1)+\mu$, and, by Lemma \ref{l:eta0}, we have 
$$
\begin{aligned}
\eta_{d-(q-1)}(m-1)
=\wt(\PRM_{d-(q-1)}(m-1))-\mu \frac{q^{m-\nu-1}-1}{q-1}. 
\end{aligned}
$$
If we take into account Theorem \ref{T:paramPRM}, we see that $\wt(\PRM_{d-(q-1)}(m-1))=\wt(\PRM_d(m))$, and therefore $\etadm=\eta_{d-(q-1)}(m-1)$. 
    
\end{proof}

As we mentioned above, the lower bound $\etadm$ for $\wt(\PRM_d(m))$ is motivated by the recursive construction from \cite{sanjoseRecursivePRM}. Indeed, from \cite[Thm. 7]{sanjoseRecursivePRM} we have
$$
\wt(\PRM_d(m))\geq \min \{ \wt(\RM_{d-1}(m)),q \cdot \wt (\PRM_{d}(m-1)),\wt(\RM_d)+\wt(\PRM_d(m-1))\}.
$$
As a consequence of Theorems \ref{T:paramPRM} and \ref{T:paramRM}, and Remark \ref{r:dminRMYPRM}, for $d\leq (m-1)(q-1)$ we have
$$
\wt(\RM_{d-1}(m))=\wt(\PRM_d(m))=q\cdot \wt(\PRM_d(m-1).
$$
For $d> (m-1)(q-1)$, we have $\wt(\RM_{d-1}(m))\leq q\cdot \wt(\PRM_d(m-1))=q$. Thus, for $d\leq m (q-1)$ we obtain
$$
\wt(\PRM_d(m))\geq \min \{ \wt(\RM_{d-1}(m)),\wt(\RM_d)+\wt(\PRM_d(m-1))\}.
$$
As in the proof of Lemma \ref{l:eta0}, one can prove that, if $d\leq m(q-1)$, then 
$$
\wt(\RM_{d-1}(m))\geq\wt(\RM_d)+\wt(\PRM_d(m-1)),
$$
and we obtain
$$
\wt(\PRM_d(m))\geq \wt(\RM_d)+\wt(\PRM_d(m-1)).
$$
By iterating this, one gets the bound given in Lemma \ref{l:eta0}. Also, this bound can be obtained by realizing that we are evaluating a homogeneous polynomial of degree $d$ in $P^m$, and, when restricted to $\{0\}^{\ell}\times \{1\}\times \mathbb{A}^{m-\ell}\subset P^m$, it has the same evaluation as a polynomial of degree at most $d$ in $\mathbb{A}^{m-\ell}$. In this direction, the following remark will be used several times in what follows.

\begin{rem}\label{r:u+vRMd}
In Theorem \ref{T:recursive}, we always have that $u+v_{\xi,d}\in \RM_d(m)$. Indeed, if $\evp{m}(f)=(u+v_{\xi,d},v)$ for some $f\in \fq[x_0,\dots,x_m]_d$, then 
$$
\eva{m}(f(1,x_1,\dots,x_m))=\evp{m}(f(x_0,x_1,\dots,x_m))\hspace{-0.1cm}\mid_{\{1\}\times \fq^m}=(u+v_{\xi,d}),
$$
where, for $w\in \fq^{q^m}$, we denote $w\hspace{-0.1cm}\mid_{\{1\}\times \fq^m}:=(w_Q)_{Q\in \{1\}\times \fq^m}$.
\end{rem}

\section{Recursive decoding algorithm}\label{s:decodingalg}
Fix $m>0$. In this section, we show a decoding algorithm, Algorithm \ref{alg:1}, to decode $\PRM_d(m)$ if the number of errors is lower than $t_0:=\eta_d(m)/2$. We denote the corresponding decoder $\DPRM(m)$. This algorithm relies on a choice of a decoding algorithm $\DRM(j)$ for RM codes, for each $j\leq m$. To the knowledge of the author, the most efficient decoding algorithm is obtained by considering the Berlekamp-Massey-Sakata algorithm \cite{sakata1,sakata2} together with the majority voting algorithm \cite{duursmaMajority, fengraoMajority} to decode RM codes up to their error correction capability \cite{sakataplusvoting}, or even more depending on the type of errors \cite{brasCorrectionCapabilityBMS}. A simpler approach could be to just consider majority voting decoding as in \cite{hoholdtAGCodesHandbook,munueraIntroAGCodes,fengraoMajority}, which also decodes up to the error correction capability of RM codes, but with higher complexity. With respect to PRM codes, the only specific decoder known is given in \cite{decodingRMP}, in which the authors consider a decomposition of the projective space into affine spaces, while our approach relies on the recursive construction from \cite{sanjoseRecursivePRM}.

It is important to note that, given $c\in \RM_d(j)$ and $e\in \fq^{q^j}$ with $\wt(e)<\wt(\RM_d(j))/2$, we assume that $\DRM(j)(c+e)=(c,f)\in \fq^{q^j}\times \fq[x_{m-j+1},\dots,x_m]_{\leq d}$ with $c=\eva{j}(f)$, i.e., we assume that $\DRM(j)$ returns both the codeword and the polynomial in the last $j$ variables whose evaluation gives the codeword $c$. Most decoding algorithms for RM codes obtain both $c$ and $f$ simultaneously, but $f$ can also be obtained from $c$ by solving a linear system of equations (expressing $f$ in terms of the chosen basis for $\RM_d(j)$). Note that it is better to avoid this since the complexity of solving this linear system of equations is $O(k^2n)$ for a code of dimension $k$ and length $n$. This may be comparable to the complexity of some decoding algorithms for high $k$, e.g., majority decoding has complexity $O(n^3)$ \cite{hoholdtAGCodesHandbook}.  We will also assume that $\DRM(j)$ decodes up to the error correction capability of $\RM_d(j)$, which is the case for all the decoders we have mentioned for RM codes. 

As a convention, if at some point we use a decoder whose error correction capability is $0$, we assume it returns the received vector if it belongs to the corresponding code, and an error otherwise. With respect to the terminology, by the error correction capability of an $[n,k,\delta]$ code $C$ we mean $\left\lfloor \frac{\delta-1}{2}\right\rfloor$, since it can correct any number of errors lower than or equal to $\left\lfloor \frac{\delta-1}{2}\right\rfloor$, or, equivalently, any number of errors lower than $\delta/2$. For completeness, we start by briefly covering the case of PRS codes, then we show how our decoder works for $\PRM_d(2)$ to illustrate the main ideas of Algorithm \ref{alg:1}, and then we prove that the algorithm works in the general case of $\PRM_d(m)$.  

\subsection{Projective Reed-Solomon codes}\label{ss:prs}
We show now how to decode PRS codes up to their error correction capability using our recursive method (for a direct approach, see \cite{arneDecodingPRS}). Let $m=1$ and
$$
t:=\frac{\wt(\PRS_d)}{2}=\frac{q-d+1-1}{2}  = \frac{q-d}{2}=\frac{\wt(\RS_{d-1})}{2}=\frac{\eta_d(1)}{2}.
$$
Assume we want to send $c=(c_1,c_2)\in \PRS_d$, where $c_1\in \fq^{q}$, $c_2\in \fq$, and the receiver gets $r=(r_1,r_2)=(c_1+e_1,c_2+e_2)$, where $e=(e_1,e_2)$ has $\wt(e)=\wt(e_1)+\wt(e_2)< t$. First we consider $(c'_1,f'_1)=\DRM(1)(r_1)=\DRM(1)(c_1+e_1)$ (recall that, by Theorem \ref{T:recursive}, $c_1\in \RS_d$), where $c'_1\in \fq^q$ and $f'_1\in \fq[x_1]$. We define $c'_2$ as the coefficient of $x_1^d$ of the polynomial $f_1'$. If $\wt(r-c')< t$, where $c'=(c'_1,c'_2)$, then $c=c'$ and we return $(c',f_1')$. On the other hand, if $\wt(r-c')\geq  t$, then this implies that $\wt(e_1)\geq t'$, where 
$$
t'=\frac{q-d-1}{2}=t-\frac{1}{2}=\frac{\wt(\RS_d)}{2},
$$
since otherwise the previous procedure would have given $c=c'$. Thus, we have
$$
t'+\wt(e_2)\leq \wt(e_1)+\wt(e_2)<t \iff \wt(e_2)<1/2. 
$$
Therefore, $\wt(e_2)=0$ and $c_2=r_2$. Let $f\in \fq[x_0,x_1]_d$ such that $\evp{1}(f)=c$. From this, we obtain that the coefficient of $x_1^d$ in $f$ is $r_2$. Moreover, $g=(f-r_2x_1^d)(1,x_1) \in \fq[x_1]_{\leq d-1}$. Thus, 
$$
u_1=\eva{1}(g)=c_1-(r_2)_{\xi,d} \in \RS_{d-1}.
$$
We can consider $(u_1,g)=\DRMdd(1)(r_1-(r_2)_{\xi,d})=\DRMdd(1)(c_1-(r_2)_{\xi,d}+e_1)$, since $\wt(e_1)<t$, which means that it is within the error correction capability of $\RS_{d-1}$. We return $(c,f)=((u_1+(r_2)_{\xi,d},r_2), h_d^0(g)+r_2x_1^d)$ (it is easy to check that the evaluation over $P^1$ of the polynomial we are returning is equal to the codeword we return). 

In the worst case scenario, we will have to use both $\DRM(1)$ and $\DRMdd(1)$, and the order of complexity is the same as the one of $\DRM(1)$. Also, this worst case scenario happens if and only if $\wt(e)=t$ and $\wt(e_2)= 0$ (that is, there is no error in the coordinate associated to $(0,1)$). If $\wt(e_2)\neq 0$, we only need to use $\DRM(1)$. 

\subsection{Projective Reed-Muller codes over the projective plane}\label{ss:prm2}
In this subsection, we generalize the previous decoding algorithm to $\PRM_d(2)$. Let $m=2$, $\nu(q-1) < d\leq (\nu+1)(q-1)$, $0\leq \nu\leq 1$, and consider
$$
t:= \frac{\wt(\PRM_d(2))}{2}, \;\eta_d(2)=\sum_{i=0}^{1-\nu} \wt(\RM_d(m-i))+1, \text{ and } t_0:=  \frac{\eta_d(2)}{2} .
$$
We show now how to decode $\PRM_d(2)$, as long as $\wt(e)<t_0\leq t$ (the last inequality follows from Lemma \ref{l:eta0}). Let $c\in \PRM_d(2)$ and $f\in \fq[x_0,x_1,x_2]_d$ be such that $\evp{m}(f)=c$. By Theorem \ref{T:recursive}, we have
$$
c=(u+v_{\xi,d},v),\; u\in \RM_{d-1}(2), v\in \PRS_d. 
$$
Now consider $e=(e_1,e_2)$, where $e_1\in \fq^{q^2}$, $e_2\in \fq^{q+1}$, with $\wt(e)=\wt(e_1)+\wt(e_2)<t_0$, and assume we receive the vector $(r_1,r_2)=(u+v_{\xi,d}+e_1,v+e_2)$. We start with the case $\nu=0$. In that case, notice that we have 
$$
\wt(e)=\wt(e_1)+\wt(e_2)< t_0= \frac{\eta_d(2)}{2} = \frac{\wt(\RM_d(2))}{2}+\frac{\wt(\RS_d)+1}{2}.
$$
Therefore, either 
$$
\wt(e_1)< \frac{\wt(\RM_d(2))}{2}:=t_1,
$$
or 
$$
\wt(e_2)< \frac{\wt(\RS_d)+1}{2}:=t_2.
$$
If $\wt(e_1)< t_1$, then we can consider $(u+v_{\xi,d},f^0)=\DRM(2)(u+v_{\xi,d}+e_1)$ since $u+v_{\xi,d}\in \RM_d(2)$ (see Remark \ref{r:u+vRMd}) and $\wt(e_1)$ is lower than or equal to the error correction capability of $\RM_d(2)$. From $u+v_{\xi,d}$ we can obtain both $u$ and $v_{\xi,d}$, and, as a consequence, $v$. Indeed, $f^0\in \fq[x_1,x_2]$ can be written as 
$$
f^0=f^0_{\leq d-1}+f^0_d,
$$
where $f^0_{\leq d-1}\in \fq[x_1,x_2]_{\leq d-1}$ and $f^0_d \in \fq[x_1,x_2]_d$. Then we have
$$
u=\eva{2}(f^0_{\leq d-1}), \; v=\evp{1}(f^0_d), \; c=\evp{2}(h^0_d(f^0)).
$$
Note that $f(1,x_1,x_2)=f^0$ (we do not need to take into account the equations of $I(\A^2)$ since $\nu=0$ implies $d\leq q-1$), and the monomials of $f$ and $f^0$ are in bijection via $h_d^0$. 

When decoding, we do not know if $\wt(e_1)< t_1$, but we can nonetheless proceed as above and obtain a vector $c'$. If the corresponding error vector $(r-c')$ has weight less than $t$, then we conclude $c=c'$ and we finish the algorithm. If not, then we know $\wt(e_1)\geq t_1$. In that case, we have $\wt(e_2)< t_2$ and $\wt(e_2)$ is within the error correction capability of $\PRS_d$, which means that we can consider $(v,g)=\DPRM(1)(v+e_2)$ (using the algorithm from Subsection \ref{ss:prs}). If $d=q-1$, $t_2=1$, $\wt(e_2)=0$, we assume that $\DPRM(1)$ just returns the received vector and its corresponding polynomial. From $v$, we obtain $v_{\xi,d}$, and we can consider 
$$
(u,f^0_{\leq d-1})=\DRMdd(2)(r_1-v_{\xi,d})=\DRMdd(2)(u+e_1),
$$
since $\wt(e_1)\leq \wt(e_1)+\wt(e_2)=\wt(e)< t_0\leq t$, and both $\PRM_d(2)$ and $\RM_{d-1}(2)$ can correct any number of errors lower than $t$. Thus, we have already obtained $c=(u+v_{\xi,d},v)$. With respect to $f$, we just need to consider $f=h_d^0(f^0_{\leq d-1})+g$.

If $\nu=1$, then $\eta_d(2)=\wt(\RM_d(2))+1$. We can proceed as above, but now we consider $t_2:=1/2$. The only difference is that, in the case $\wt(e_1)<t_1$, when we obtain $(u+v_{\xi,d},f^0)=\DRM(2)(u+v_{\xi,d}+e_1)$, we cannot immediately deduce $u$ and $v_{\xi,d}$. We know that 
$$
f^0\equiv f(1,x_1,x_2) \bmod I(\A^2),
$$
but there might be several possible homogenizations of $f^0\in \fq[x_1,x_2]_{\leq d}$ to a polynomial in $\fq[x_0,x_1,x_2]_d$, as we see in the next example.

\begin{ex}\label{ex:differenthom}
Let $d=q$ and let $e=(e_1,e_2)\in \fq^{p_2}$ be such that $\wt(e_1)<t_1$. Assume that $f=x_0^{q-1}x_1-x_1^q$. Since $f(1,x_1,x_2)\equiv 0 \bmod I(\A^2)$, when following the procedure above we will get 
$$
\DRM(2)(u+v_{\xi,d}+e_1)=((0,\dots,0),0).
$$
Thus, we have $f^0=0$, but there are several ways to homogenize this polynomial. For example, the polynomials $x_0^{q-1}x_2-x_2^q$ and $x_0^{q-1}x_1-x_1^q+ x_0^{q-1}x_2-x_2^q$ are possible homogenizations of $f^0$ to degree $d$ which are different from $f$, and they are equivalent to $0\bmod I(\A^2)$ when setting $x_0=1$.  
\end{ex}

To obtain $f$, we need to use the information from the last $q+1$ coordinates. First we decompose $f^0$ as follows (recall Definition \ref{def:badmonomials}):
$$
f^0=\fbado +\fgoodo=\fbado+(\fgoodo)_d +(\fgoodo)_{\leq d-1}.
$$

We can now motivate Definition \ref{def:badmonomials}. Let $x^\alpha\in \supp(f^0)$. If $\abs{\alpha}=d$, then $x^\alpha\in \supp(f^0)$ implies $x^\alpha\in \supp(f)$. Moreover, if $\abs{\alpha}<d$ and $\abs{\alpha}\not\equiv d\bmod q-1$, there is only one monomial $x^\beta\in M_d$ such that $x^\beta(1,x_1,x_2)\equiv x^\alpha \bmod I(\A^2)$, which is precisely $h_d^0(x^\alpha)$ (a similar thing happens for $\abs{\alpha}=0$, even if $d\equiv 0 \bmod q-1$). Therefore, $h_d^0(x^\alpha)\in \supp(f)$. The rest of the monomials admit several different homogenizations, and that is why we call them bad monomials.

\begin{ex}\label{ex:differenthom2}
Following the setting from Example \ref{ex:differenthom}, we consider instead $f=x_0^{q-1}x_1-x_1^q+x_1^{q-1}x_2+x_0^{q-2}x_2^2+x_2^q$. In this case, 
$\DRM(2)$ will return $f^0= x_1^{q-1}x_2+x_2^2+x_2$. We have
$$
\fbado =x_2,\; \fgoodo=x_1^{q-1}x_2+x_2^2, \; (\fgoodo)_d=x_1^{q-1}x_2,\; (\fgoodo)_{\leq d-1}=x_2^2.
$$
Note that $x_2$ can be homogenized to two different monomials of degree $q$: $x_0^{q-1}x_2$ and $x_2^q$.
\end{ex}

We give the next result in general since we will also use it in the general case.

\begin{lem}\label{l:fdqminus1}
We have $\evp{m-1}((f-h_d^0(\fgoodo))(0,x_1,\dots,x_m))\in \PRM_{d-(q-1)}(m-1)$ and $\evp{m-1}((f-h_d^0(\fgoodo))(0,x_1,\dots,x_m))=\evp{m-1}((f-(\fgoodo)_d)(0,x_1,\dots,x_m))$. 
\end{lem}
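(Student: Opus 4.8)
The plan is first to prove the identity of polynomials $(f-h_d^0(\fgoodo))(0,x_1,\dots,x_m)=(f-(\fgoodo)_d)(0,x_1,\dots,x_m)$ — which yields the asserted equality of evaluations immediately — and then to show that this common polynomial is an $\fq$-linear combination of degree-$d$ monomials in $x_1,\dots,x_m$, each of which involves some variable to a power at least $q$; membership in $\PRM_{d-(q-1)}(m-1)$ will then follow monomial by monomial. We may assume $d\geq q$, for if $d\leq q-1$ then $\fbado=0$, $f^0=f(1,x_1,\dots,x_m)$ and $h_d^0(\fgoodo)=f$, so both expressions are zero.

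For the identity, recall that $h_d^0$ is $\fq$-linear and $\fgoodo=(\fgoodo)_d+(\fgoodo)_{\leq d-1}$. Since $(\fgoodo)_d$ is already homogeneous of degree $d$, we have $h_d^0((\fgoodo)_d)=(\fgoodo)_d$; and every monomial of $(\fgoodo)_{\leq d-1}$ has degree at most $d-1$, hence is multiplied by a positive power of $x_0$ under $h_d^0$, so $h_d^0((\fgoodo)_{\leq d-1})$ is divisible by $x_0$. Therefore $h_d^0(\fgoodo)(0,x_1,\dots,x_m)=(\fgoodo)_d$, which equals $(\fgoodo)_d(0,x_1,\dots,x_m)$ as $(\fgoodo)_d$ contains no $x_0$; subtracting from $f(0,x_1,\dots,x_m)$ gives the identity.

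Next I would identify the common polynomial explicitly. Observe that $f(0,x_1,\dots,x_m)$ is the sum of the terms of $f$ not divisible by $x_0$; these are monomials in $x_1,\dots,x_m$ of degree $d$, and are exactly the degree-$d$ monomials of $f(1,x_1,\dots,x_m)$. On the other hand $(\fgoodo)_d$ is the degree-$d$ part of $f^0$, because bad monomials and the monomials of $(\fgoodo)_{\leq d-1}$ all have degree strictly below $d$. Since $f^0$ is the reduction of $f(1,x_1,\dots,x_m)$ modulo $I(\A^m)$, and each step $x_i^q\mapsto x_i$ strictly lowers the total degree while fixing the monomials already lying in $A(m)$, the degree-$d$ part of $f^0$ receives contributions only from those degree-$d$ monomials of $f(1,x_1,\dots,x_m)$ that already lie in $A(m)$, each keeping its coefficient. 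Hence $(f-(\fgoodo)_d)(0,x_1,\dots,x_m)$ is the sum of the terms of $f$ not divisible by $x_0$ whose exponent vector in $x_1,\dots,x_m$ lies outside $A(m)$; every such monomial therefore has some variable raised to a power $\geq q$ (and in particular $d\geq q$, consistent with our assumption).

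Finally, for each monomial $x^{\bar\alpha}$ appearing, choose $i$ with $\bar\alpha_i\geq q$ and let $e_i$ be the corresponding unit vector. I claim $\evp{m-1}(x^{\bar\alpha})=\evp{m-1}(x^{\bar\alpha-(q-1)e_i})$: at a standard representative $Q\in P^{m-1}$, if $Q_i\neq 0$ then $Q_i^{q-1}=1$ and the two monomials agree at $Q$, while if $Q_i=0$ then both vanish since $\bar\alpha_i\geq q\geq 1$ and $\bar\alpha_i-(q-1)\geq 1$ — this is the same mechanism used in Remark \ref{r:homogeneizar}. As $x^{\bar\alpha-(q-1)e_i}$ is homogeneous of degree $d-(q-1)$, its evaluation lies in $\PRM_{d-(q-1)}(m-1)$, and summing over the finitely many monomials (using that $\PRM_{d-(q-1)}(m-1)$ is a linear code) gives the result. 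The only mildly delicate step is the bookkeeping in the third paragraph — pinning down exactly which monomials of $f(1,x_1,\dots,x_m)$ feed the degree-$d$ part of its reduction modulo $I(\A^m)$ — but this is forced by the fact that reduction can only decrease degree and leaves $A(m)$-monomials untouched, so no degree-raising or cancellation can occur there.
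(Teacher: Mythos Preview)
Your proof is correct and takes a genuinely different route from the paper's. The paper proceeds by a case analysis over monomials $x^\beta$ in $\supp(f)$ (and in $\supp(h_d^0(\fgoodo))$), distinguishing whether $x^\beta(1,x_1,\dots,x_m)$ survives in $\supp(f^0)$, whether it lands in $\supp(\fgoodo)$ or $\supp(\fbado)$, and whether $x_0\mid x^\beta$; in the nontrivial cases it invokes Lemma~\ref{l:monomiosenA} to produce a degree-$(d-(q-1))$ monomial with the same projective evaluation. You instead first observe the polynomial identity $h_d^0(\fgoodo)(0,x_1,\dots,x_m)=(\fgoodo)_d$ (which the paper only uses at the very end, for the second assertion), and then characterize the remaining polynomial in one shot as the sum of the degree-$d$ terms of $f$ not involving $x_0$ and lying outside $A(m)$; the crucial bookkeeping step---that $(\fgoodo)_d$ equals exactly the $A(m)$-part of the degree-$d$ terms of $f(1,x_1,\dots,x_m)$---is justified cleanly by the observation that reduction modulo $I(\A^m)$ strictly lowers degree on monomials outside $A(m)$ and hence cannot feed into the degree-$d$ component. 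This bypasses Lemma~\ref{l:monomiosenA} entirely and makes the structure of the residual polynomial transparent; the price is that you must argue the ``no cancellation'' point explicitly, which you do. Both approaches ultimately rest on the same mechanism from Remark~\ref{r:homogeneizar} to land in $\PRM_{d-(q-1)}(m-1)$.
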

\begin{proof}
Let $x^\beta \in \supp(f)$. If $x^\beta(1,x_1,\dots,x_m)\not \in \supp(f^0)$, this means that this monomial gets canceled with another monomial when reducing $\bmod \; I(\A^m)$. This also implies that $x^\beta \not \in \supp(h_d^0(\fgoodo))$ and $x^\beta \in \supp(f-h_d^0(\fgoodo))$. We claim that 
$$
\evp{m-1}((x^\beta)(0,x_1,\dots,x_m)) \in \PRM_{d-(q-1)}(m).
$$

If $x_0\nmid x^\beta$, by Lemma \ref{l:monomiosenA} there is $x^{\beta'}\in M_{d-(q-1)}$ with $\evp{m}(x^\beta)=\evp{m}(x^{\beta'})$. If $x_0\mid x^\beta$, then $\evp{m-1}((x^\beta)(0,x_1,\dots,x_m))=0$. Thus, in both cases, the claimed statement is true.

Now assume $x^\beta(1,x_1,\dots,x_m) \in \supp(f^0)$. If  $x^\beta(1,x_1,\dots,x_m) \in \supp(\fgoodo)$, since this monomial only has one possible homogenization because it is a good monomial, we have $x^\beta \not \in f-h_d^0(\fgoodo)$. If $x^\beta(1,x_1,\dots,x_m) \in \supp(\fbado)$, this monomial may have $x^\beta=h_d^0(x^\beta(1,x_1,\dots,x_m))$ or $x^\beta$ as in Remark \ref{r:homogeneizar} (we have two possible homogenizations), depending on whether $x_0\mid x^\beta$ or not. If $x_0\mid x^\beta$, we have $\evp{m-1}((x^\beta)(0,x_1,\dots,x_m))=0$. If $x_0\nmid x^\beta$, then Remark \ref{r:homogeneizar} shows that there exists $x^{\beta'}\in \fq[x_1,\dots,x_m]_{d-(q-1)}$ such that $\evp{m}(x^\beta)=\evp{m}(x^{\beta'})$ (in particular, their image by $\evp{m-1}$ is also the same since $x_0\nmid x^\beta$). 

Now let $x^\beta \not\in \supp(f)$. Then $x^\beta \in \supp(f-h_d^0(\fgoodo))$ if and only if $x^\beta \in \supp h_d^0(\fgoodo)$. If $x_0\nmid x^\beta$, then $x^\beta \in \supp(\fgoodo)_d\subset \supp(f)$, a contradiction. If $x_0\mid x^\beta$,  we have $\evp{m-1}((x^\beta)(0,x_1,\dots,x_m))=0$.

Thus, we have proved that $\evp{m-1}((f-h_d^0(\fgoodo))(0,x_1,\dots,x_m))\in \PRM_{d-(q-1)}(m-1)$. The last assertion follows from the fact that 
$$
h_d^0(\fgoodo)(0,x_1,\dots,x_m)=(\fgoodo)_d(0,x_1,\dots,x_m).
$$
\end{proof}

Let $\cgood:= \evp{1}((\fgoodo)_d)$. As a consequence of Lemma \ref{l:fdqminus1} we have
$$
r_2-\cgood=\evp{m-1}(f-(\fgoodo)_d)+e_2=\cbad +e_2,
$$
where $\cbad:=\evp{m-1}(f-(\fgoodo)_d) \in \PRM_{d-(q-1)}(1)$. Now we consider $(\cbad,f'_{bad})=\DPRMq(1)(r_2-\cgood)$. By Corollary \ref{c:etaq-1}, the error $e_2$ falls within the error correction capability of this decoder (which is the one of Subsection \ref{ss:prs}), since $\wt(e_2)<t_0$. Recovering $c$ from this is straightforward since $c=(c_1,\cgood+\cbad)=(u+v_{\xi,d},\cgood+\cbad)$. To recover $f$, first notice that $f'_{bad}$ can be homogenized to degree $d$ as in Remark \ref{r:homogeneizar}, obtaining $(f'_{bad})_d$. Let $g^0\in \fq[x_0,x_1,x_2]$ such that 
$$
g^0\equiv f^0-f'_{bad}-(\fgoodo)_d \equiv (\fgoodo)_{\leq d-1}+\fbad -f'_{bad} \bmod I(\A^2).
$$
By the reasoning above, we can assume $g^0\in \fq[x_0,x_1,x_2]_{\leq d-1}$ (when we write $\bmod I(\A^m)$, we assume we are considering reduced polynomials). Now we can recover $f$ as
$$
f=h_d^0(g^0)+(f'_{bad})_d+(\fgoodo)_d.
$$
Indeed, we have
$$
f(1,x_1,x_2)=g^0+(f'_{bad})_d+(\fgoodo)_d\equiv f^0\bmod I(\A^2),
$$
which implies $\eva{m}(f(1,x_1,x_2))=c_1$, and
$$
f(0,x_1,x_2)=(f'_{bad})_d+(\fgoodo)_d,
$$
which also implies $\evp{m-1}(f(0,x_1,x_2))=\cgood+\cbad$, that is, $\evp{m}(f)=c$. The rest of the arguments work as in the case $\nu=0$. 

\begin{ex}\label{ex:differenthom3}
Following Example \ref{ex:differenthom2}, we consider 
$$
\cgood =\evp{1}(x_1^{q-1}x_2),
$$
and $\DPRMq(1)(r_2-\cgood)$ returns $f'_{bad}=-x_1+x_2$. Then we have $(f'_{bad})_q=-x_1^q+x_2^q$ and
$$
g^0=(x_1^{q-1}x_2+x_2^2+x_2)-(-x_1+x_2)-x_1^{q-1}x_2=x_2^2+x_1.
$$
Thus, we recover
$$
f=h_q^0(x_2^2+x_1)+(-x_1^q+x_2^q)+x_1^{q-1}x_2=x_0^{q-1}x_1-x_1^q+x_1^{q-1}x_2+x_0^{q-2}x_2^2+x_2^q.
$$
\end{ex}

For $\nu=0$, in the worst case scenario we use $\DRM(2)$, $\DPRM(1)$ (which means using $\DRMdd(1)$ and $\DRM(1)$ in the worst case scenario) and $\DRMdd(2)$. On the other hand, for $\nu=1$, since $\DPRM(1)$ is trivial in that case, at most we need to use $\DRM(2)$ and $\DRMdd(2)$.

\subsection{Projective Reed-Muller codes, general case}\label{ss:PRMgeneral}

We give now the main result of this work, based on Algorithm \ref{alg:1}. For the reader interested in a more detailed description of this algorithm, it may be convenient to read Subsections \ref{ss:prs} and \ref{ss:prm2} before, since in the proof of Theorem \ref{T:alg} we reduce the number of details to avoid repetition. We denote 
$$
t:= \frac{\wt(\PRM_d(m))}{2}, \;\etadm:= \sum_{i=0}^{m-\nu-1} \wt(\RM_d(m-i))+1, \text{ and } t_0:=  \frac{\etadm}{2} .
$$

\begin{thm}\label{T:alg}
Let $d-1=\nu (q-1)+\mu$, $d\leq m(q-1)$, $0\leq \mu  <q-1$. Let $r=c+e$ be a received codeword, with $c\in \PRM_d(m)$ and $\wt(e)<t_0$. Then Algorithm \ref{alg:1} can be used to recover $c$ from $r$. 
\end{thm}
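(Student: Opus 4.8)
The plan is to induct on $m$, using the recursive construction of Theorem \ref{T:recursive} as the structural backbone and mimicking the two worked cases ($m=1$ in Subsection \ref{ss:prs}, $m=2$ in Subsection \ref{ss:prm2}), but now carrying the bad/good monomial bookkeeping through an arbitrary number of recursive steps. For the base case $m=1$ we appeal directly to Subsection \ref{ss:prs}. For the inductive step, write $c=(u+v_{\xi,d},v)$ with $u\in\RM_{d-1}(m)$, $v\in\PRM_d(m-1)$ as in Theorem \ref{T:recursive}, and $r=(r_1,r_2)=(u+v_{\xi,d}+e_1,v+e_2)$ with $\wt(e_1)+\wt(e_2)<t_0=\etadm/2$. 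Setting $t_1:=\wt(\RM_d(m))/2$ and using $\etadm=\wt(\RM_d(m))+\eta_{d-(q-1)}(m-1)$ — which is exactly Corollary \ref{c:etaq-1} together with the summation form of $\etadm$ — one gets the dichotomy: either $\wt(e_1)<t_1$, or $\wt(e_2)<\eta_{d-(q-1)}(m-1)/2=t_0'$, the threshold for the recursive call on $\PRM_{d-(q-1)}(m-1)$.

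First I would handle the branch $\wt(e_1)<t_1$. Run $\DRM(m)$ on $r_1$ (legitimate by Remark \ref{r:u+vRMd}, which guarantees $u+v_{\xi,d}\in\RM_d(m)$, and by the error bound), obtaining $(u+v_{\xi,d},f^0)$ with $f^0\in\fq[x_1,\dots,x_m]_{\le d}$ reduced mod $I(\A^m)$ and $f^0\equiv f(1,x_1,\dots,x_m)\bmod I(\A^m)$. Decompose $f^0=\fbado+(\fgoodo)_d+(\fgoodo)_{\le d-1}$ as in Definition \ref{def:badmonomials}. Set $\cgood:=\evp{m-1}((\fgoodo)_d)$; by Lemma \ref{l:fdqminus1}, $r_2-\cgood=\cbad+e_2$ with $\cbad:=\evp{m-1}((f-(\fgoodo)_d)(0,x_1,\dots,x_m))\in\PRM_{d-(q-1)}(m-1)$. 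Since $\wt(e_2)<t_0\le t_0'$ by Corollary \ref{c:etaq-1}, the recursive call $\DPRMq(m-1)(r_2-\cgood)$ — which is where the induction hypothesis is invoked — returns $\cbad$ together with a polynomial $f'_{bad}\in\fq[x_1,\dots,x_m]$ reduced mod $I(\A^{m-1})$ with $\evp{m-1}(f'_{bad})=\cbad$. Then $c=(u+v_{\xi,d},\cgood+\cbad)$, and, exactly as in the $m=2$ case, homogenize $f'_{bad}$ to $(f'_{bad})_d$ via Remark \ref{r:homogeneizar}, let $g^0\in\fq[x_1,\dots,x_m]_{\le d-1}$ be the reduction of $f^0-f'_{bad}-(\fgoodo)_d$ mod $I(\A^m)$, and recover $f=h_d^0(g^0)+(f'_{bad})_d+(\fgoodo)_d$. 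One checks $f(1,x_1,\dots,x_m)\equiv f^0$ and $f(0,x_1,\dots,x_m)=(f'_{bad})_d+(\fgoodo)_d$, so $\evp{m}(f)=c$. As in Subsection \ref{ss:prm2}, we do not know a priori whether $\wt(e_1)<t_1$; we run this branch unconditionally, obtain a candidate $c'$, and accept it iff $\wt(r-c')<t$ (valid since $t_0\le t$ by Lemma \ref{l:eta0}).

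If that test fails, then $\wt(e_1)\ge t_1$, forcing $\wt(e_2)<t_0'$, and we take the second branch: recursively decode $(v,g)=\DPRMq$-style but at degree $d$, i.e. $(v,g)=\DPRM(m-1)(r_2)$ — legitimate since $\wt(e_2)<t_0'\le \etadm[?]$; more precisely $\wt(e_2)\le\wt(e)<t_0\le t=\wt(\PRM_d(m-1))\cdot q/2\ldots$ — here I would instead argue as in the $m=2$ case that $\wt(e_2)$ is within the error-correction capability of $\PRM_d(m-1)$ because $\wt(e_2)\le\wt(e)<t_0$ and $t_0\le \wt(\PRM_d(m-1))/2$ fails in general, so the cleaner route is: $\wt(e_2)<t_0'$, decode $\PRM_{d-(q-1)}(m-1)$-part as above to get the "bad" piece, but now we already know $\wt(e_1)$ is large relative to $t_1$ yet still $\wt(e_1)\le\wt(e)<t_0\le t$, so $\RM_{d-1}(m)$ (error-correction capability $\ge t$) can correct $e_1$ once $v_{\xi,d}$ is subtracted: form $v_{\xi,d}$ from $v$, run $\DRMdd(m)(r_1-v_{\xi,d})=\DRMdd(m)(u+e_1)$ to get $u$ and $f^0_{\le d-1}$, and assemble $f=h_d^0(f^0_{\le d-1})+g$ where $g$ is the homogeneous degree-$d$ polynomial over the last $m-1$ variables returned (suitably homogenized) by the recursive decoder for $v$. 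Both branches terminate and return the correct $(c,f)$.

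The main obstacle is the second branch's recursive call and the consistency of the polynomial returned across the recursion: when $\wt(e_1)\ge t_1$, one must decode the $v$-part at degree $d$ over $\PRM_d(m-1)$, and I must check that $\wt(e_2)$ lies within that decoder's guaranteed range — this is where I would need the inequality $\wt(e_2)<t_0'=\eta_{d-(q-1)}(m-1)/2$ together with the observation (as in the $\nu=1$ sub-case of Subsection \ref{ss:prm2}) that decoding $\PRM_d(m-1)$ can be bypassed in favour of decoding $\PRM_{d-(q-1)}(m-1)$ for the "bad" part plus $\RM_{d-1}(m)$ for the rest, so that the only recursive PRM call is at the reduced degree $d-(q-1)$ and dimension $m-1$, matching the induction hypothesis via Corollary \ref{c:etaq-1}. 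The second delicate point is verifying that the homogenizations patch together correctly — i.e. that $f=h_d^0(g^0)+(f'_{bad})_d+(\fgoodo)_d$ genuinely satisfies $\evp{m}(f)=c$ — which is exactly the content of Lemma \ref{l:fdqminus1} and Lemma \ref{l:monomiosenA} applied inductively, plus a routine check that the supports of the three summands are disjoint so no unintended cancellation occurs mod $I(\PM)$.
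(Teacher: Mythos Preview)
Your overall architecture matches the paper's, but there is a genuine gap in the second branch, and it stems from a wrong identity. You write $\etadm=\wt(\RM_d(m))+\eta_{d-(q-1)}(m-1)$ and attribute this to Corollary~\ref{c:etaq-1} plus the summation form. That is false: the summation form yields $\etadm=\wt(\RM_d(m))+\eta_d(m-1)$, while Corollary~\ref{c:etaq-1} says $\etadm=\eta_{d-(q-1)}(m-1)$. These two facts combined give $\eta_d(m-1)=\eta_{d-(q-1)}(m-1)-\wt(\RM_d(m))$, not your identity. Consequently your $t_0'=\eta_{d-(q-1)}(m-1)/2$ equals $t_0$, and your ``dichotomy'' in the second branch collapses to the trivial $\wt(e_2)<t_0$, which is always true and tells you nothing new.

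This is why your second branch runs into trouble. The correct threshold is $t_2:=\eta_d(m-1)/2$, and the correct conclusion when $\wt(e_1)\ge t_1$ is $\wt(e_2)<t_2$. Then you invoke the induction hypothesis for $\DPRM(m-1)$ at the \emph{same} degree~$d$ (not $d-(q-1)$), which by hypothesis corrects any error of weight $<\eta_d(m-1)/2$; this returns $(v,g)$ directly, and the rest follows as you wrote. Your proposed ``cleaner route'' --- calling $\DPRMq(m-1)$ on $r_2-\cgood$ in the second branch --- is circular: computing $\cgood$ requires $f^0$, which comes from the successful $\DRM(m)$-decoding of $r_1$, and that is precisely what has failed in this branch. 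Once you fix the identity and use $t_2=\eta_d(m-1)/2$, the second-branch recursive call is at degree $d$ and dimension $m-1$, the induction hypothesis applies verbatim, and no workaround is needed; this is exactly what the paper does.
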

\begin{proof}
We prove this by induction on $m$. For $m=1$ and $m=2$, we have seen in Subsections \ref{ss:prs} and \ref{ss:prm2} that Algorithm \ref{alg:1} works for any degree $1\leq d\leq m(q-1)$. We assume that Algorithm \ref{alg:1} works for $m'=m-1$, and we will prove that this implies it also works for $m$. Let $c\in \PRM_d(m)$ and $f\in \fq[x_0,\dots,x_m]_d$ such that $\evp{m}(f)=c$. By Theorem \ref{T:recursive}, we have
$$
c=(u+v_{\xi,d},v),\; u\in \RM_{d-1}(m), v\in \PRM_d(m-1). 
$$
Now consider $e=(e_1,e_2)$, where $e_1\in \fq^{q^m}$, $e_2\in \fq^{p_{m-1}}$, with $\wt(e)=\wt(e_1)+\wt(e_2)< t_0$, and assume we receive the codeword $(r_1,r_2)=(u+v_{\xi,d}+e_1,v+e_2)$. Notice that, by Equation (\ref{eq:eta0}), we have 
$$
\wt(e_1)+\wt(e_2)< t_0= \frac{\etadm}{2} = \frac{\wt(\RM_d(m))}{2}+\frac{\sum_{i=1}^{m-\nu-1}\wt(\RM_d(m-i))+1}{2}.
$$
Therefore, either 
$$
\wt(e_1)< \frac{\wt(\RM_d(m))}{2}:=t_1,
$$
or 
$$
\wt(e_2)< \frac{\sum_{i=1}^{m-\nu-1}\wt(\RM_d(m-i))+1}{2}=\frac{\eta_d(m-1)}{2}:=t_2.
$$

{\bf First part:} the following argument corresponds to the first part of Algorithm \ref{alg:1}, which starts at Line \ref{alg:start:primeraparte} and finishes at Line \ref{alg:finish:primeraparte}. If $\wt(e_1)< t_1$, we can consider $(u+v_{\xi,d},f^0)=\DRM(m)(u+v_{\xi,d}+e_1)$ since $u+v_{\xi,d}\in \RM_d(m)$ (see Remark \ref{r:u+vRMd}) and $\wt(e_1)$ is lower than or equal to the error correction capability of $\RM_d(m)$. 

If $d\leq q-1$, i.e., $\nu=0$, we consider 
$$
f^0=f^0_{\leq d-1}+f^0_d,
$$
where $f^0_{\leq d-1}\in \fq[x_1,\dots,x_m]_{\leq d-1}$ and $f^0_d \in \fq[x_1,\dots,x_m]_d$. Then we have
$$
u=\eva{m}(f^0_{\leq d-1}), \; v=\evp{m-1}(f^0_d), \; c=\evp{m}(h^0_d(f^0)).
$$
Therefore, we have obtained $(c,f)=((u+v_{\xi,d},v),h^0_d(f^0))$.

If $d\geq q$, i.e., $\nu\geq 1$, as discussed in Subsection \ref{ss:prm2}, there might be several possible homogenizations of $f^0$. We consider the decomposition
$$
f^0=\fbado +\fgoodo=\fbado+(\fgoodo)_d +(\fgoodo)_{\leq d-1}.
$$
Let $\cgood:= \evp{m-1}((\fgoodo)_d)$. As a consequence of Lemma \ref{l:fdqminus1}, now we can consider $(\cbad,f'_{bad})=\DPRMq(m-1)(r_2-\cgood)$. By Corollary \ref{c:etaq-1} and the induction hypothesis, the error $e_2$ falls within the error correction capability of this decoder since $\wt(e_2)<t_0=\eta_d(m)/2=\eta_{d-(q-1)}(m-1)/2$. We recover $c=(c_1,\cgood+\cbad)=(u+v_{\xi,d},\cgood+\cbad)$. To recover $f$, first consider $(f'_{bad})_d\in \fq[x_1,\dots,x_m]_d$ the homogenization of $f'_{bad}$ up to degree $d$ as in Remark \ref{r:homogeneizar}. Let $g^0\in \fq[x_0,\dots,x_m]_{\leq d-1}$ be such that 
$$
g^0\equiv f^0-f'_{bad}-(\fgoodo)_d \equiv (\fgoodo)_{\leq d-1}+\fbad -f'_{bad} \bmod I(\A^m).
$$
Now we have
$$
f=h_d^0(g^0)+(f'_{bad})_d+(\fgoodo)_d.
$$
Since we do not know if $\wt(e_1)<t_1$ when decoding, in the first part of Algorithm \ref{alg:1} we follow the previous procedure to obtain a vector $c'$. Then we check if $\wt(r-c')<t$. If that is the case, we return $c'$ and the corresponding polynomial $f'$. If not, we proceed to the second part of Algorithm \ref{alg:1}.

{\bf Second part:} what follows corresponds to the second part of Algorithm \ref{alg:2}, which starts at Line \ref{alg:start:segundaparte}. Since the first part of Algorithm \ref{alg:1} has failed, it means that $\wt(e_1)\geq t_1$, and therefore $\wt(e_2)< t_2=\eta_d(m-1)/2$. By the induction hypothesis, we can consider $(v,g)=\DPRM(m-1)(v+e_2)$, and then perform the decoding (recall Remark \ref{r:dminRMYPRM})
$$
(u,f^0_{\leq d-1})=\DRMdd(m)(r_1-v_{\xi,d})=\DRMdd(m)(u+e_1).
$$
This provides $c=(u+v_{\xi,d},v)$. With respect to $f$, we just need to consider $f=h_d^0(f^0_{\leq d-1})+g$.

We have proved that the algorithm gives the correct output, and the fact that the algorithm finishes is clear by induction.
\end{proof}

Note that Algorithm \ref{alg:1} is recursive, since it is calling itself in Lines \ref{alg:linerecursive1} and \ref{alg:start:segundaparte}. A small example of how this procedure works has been given in Examples \ref{ex:differenthom2} and \ref{ex:differenthom3}. We will also show this in Example \ref{ex:patron}. Also note that, if at some point during Algorithm \ref{alg:1} we call $\DPRM(0)$, we understand $\wt(\PRM_d(0))=1$, and therefore $t=0$. Thus, $\DPRM(0)((r))$ returns $((r),rx_m^d)$ (see Subsection \ref{ss:prs}). 

\begin{algorithm}
\caption{}\label{alg:1}
\begin{algorithmic}[1] 
\Input Received word $r=c+e$, with $c\in \PRM_d(m)$ and $\wt(e)< t_0$. Decoders $\DRM(m-i)$ for $i=0,1,\dots,m-1$. 
\Output $(c,f)$, where $f\in \fq[x_0,\dots,x_m]_d$ and $\evp{m}(f)=c$. 
\State Let $t_0=\eta_d(m)/2$, $t=\wt(\PRM_d(m))/2$. 
\If{$t\leq 1$}
    \State Compute $f$ homogeneous of degree $d$ such that $\evp{m}(f)=r$.  \label{line:solution}
    \State \Return $(r,f)$
\EndIf
\State Define $r_1,r_2$ from $r=(r_1,r_2)$, where $r_1$ are the first $q^m$ coordinates and $r_2$ the last $q^{m-1}+\cdots+1$ coordinates. 

\If{$\DRM(m)(r_1)$ does not return an error}\label{alg:start:primeraparte} \Comment{{\bf First part}}
    \State $(c_1,f^0)\gets \DRM(m)(r_1)$
    \If {$d\leq q-1$} 
        \State $f=h^0_d(f^0)$
        \State $c=\evp{m}(f)$
        \If {$\wt(r-c)<t$}
            \State \Return $(c,f)$
        \EndIf
    \Else
        \State Compute the decomposition $f^0=(\fgoodo)_d+(\fgoodo)_{<d}+\fbado$.
        \State $\cgood=\evp{m-1}((\fgoodo)_d) $
        \If {$\DPRMq(m-1)(r_2-\cgood)$ does not return an error}
            \State $(\cbad,f_{bad}')\gets \DPRMq(m-1)(r_2-\cgood)$ \label{alg:linerecursive1}
            \State $g^0=f^0-f'_{bad}-(\fgoodo)_d \bmod I(\mathbb{A}^m)$.
            \State $f=h^0_d(g^0)+(f'_{bad})_d+(\fgoodo)_d$ 
            \State $c=(c_1,\cgood+\cbad)$
            \If {$\wt(r-c)<t$}
                \State \Return $(c,f)$
            \EndIf
        \EndIf
    \EndIf  
\EndIf \label{alg:finish:primeraparte} 
\State $(v,g)\gets \DPRM(m-1)(r_2)$ \Comment{{\bf Second part}} \label{alg:start:segundaparte} 
\State $v_{\xi,d}=(v,\xi^dv,\xi^{2d}v,\dots,\xi^{(q-2)d}v,0)$
\State $(u,f^0_{\leq d-1})\gets \DRMdd(m)(r_1-v_{\xi,d})$
\State $f=h_d^0(f^0_{\leq d-1})+g$ 
\State $c=(u+v_{\xi,d},v)$ 
\State \Return $(c,f)$
\end{algorithmic}
\end{algorithm}

\section{Analysis of the algorithm and improvements}\label{s:analysis}
Let $1\leq d \leq m(q-1)$ with $d-1=\nu (q-1)+\mu$, $0\leq \mu <q-1$. Algorithm \ref{alg:1} decodes any number of errors lower than $\eta_d(m)/2$. The previous algorithm known for PRM codes was given in \cite{decodingRMP}, and it was able to correct any number of errors lower than $\wt(\RM_d(m))/2$ (see \cite[Cor. 5.2]{decodingRMP}). Looking at Equation (\ref{eq:eta0}), it is clear that
$$
\eta_d(m)=\sum_{i=0}^{m-\nu-1} \wt(\RM_d(m-i))+1> \wt(\RM_d(m)),
$$
and therefore $\wt(\RM_d(m))/2< \eta_d(m)/2$, which means that Algorithm \ref{alg:1} can decode more errors in general. Moreover, since $\wt(\RM_d(m))<\wt(\RM_{d-1}(m))=\wt(\PRM_d(m))$, the algorithm from \cite{decodingRMP} decodes up to the error correction capability of $\PRM_d(m)$ only if $\wt(\RM_{d}(m))=\wt(\RM_{d-1}(m))-1$ (which only happens if and only if $d> (m-1)(q-1)$, i.e., $\nu= m-1$) and $\wt(\RM_{d-1}(m))=\wt(\PRM_d(m))$ is even. On the other hand, Algorithm \ref{alg:1} decodes up to the error correction capability of $\PRM_d(m)$ whenever $\mu=0$ or $\nu=m-1$ (see Lemma \ref{l:eta0}), independent of the parity of $\wt(\PRM_d(m))$, and in some additional cases depending on the parity of $\wt(\PRM_d(m))$. In Figures \ref{f:m2}, \ref{f:m2qlarge} and \ref{f:mlarge} we show the value of $T_0/T$, where
$$
T=\left\lfloor \frac{\wt(\PRM_d(m))-1}{2}\right\rfloor, \; T_0=\left\lfloor \frac{\eta_d(m)-1}{2}\right\rfloor,
$$
i.e., $T_0/T$ is the quotient of the number of errors that Algorithm \ref{alg:1} can correct with respect to the error correction capability of $\PRM_d(m)$. For example, for $q=3$ and $m=2$, Algorithm \ref{alg:1} can correct up to the error correction capability of $\PRM_d(2)$ for any degree for which $\PRM_d(2)$ is non trivial. As noted before, for $\mu=0$ or $\nu=m-1$, we always have $T_0/T=1$, i.e., we correct up to the error correction capability of $\PRM_d(m)$. Also note that, for $q$ large, the error correction capability of the decoding algorithm is very close to that of $\PRM_d(m)$ for most degrees (see Figure \ref{f:m2qlarge}). 

For some degrees, PRM codes can be seen as Grassmann codes, for which some decoding algorithms have been studied \cite{beelen_majority_decoding_grassmann,kroll_PD_sets_binary_RM_schubert}. However, Algorithm \ref{alg:1} can correct more errors in general than these decoders for the corresponding PRM codes. For example, in \cite[Rem. 4.7]{beelen_majority_decoding_grassmann}, the authors mention that their algorithm decodes up to half the minimum distance for $q=2$, and roughly up to $\wt(\PRM_d(1))/4$ for large $q$. Algorithm \ref{alg:1} decodes always up to the error correction capability of $\PRM_d(1)$ for any $q$.

\begin{figure}
\caption{$T_0/T$ as a function of $1\leq d \leq m(q-1)$, for $m=2$ and $q$ small.} \label{f:m2}
    \begin{center}
        \resizebox{0.45\linewidth}{!}{\input{GF3m2.pgf}}
        \resizebox{0.45\linewidth}{!}{\input{GF4m2.pgf}}
        \resizebox{0.45\linewidth}{!}{\input{GF5m2.pgf}}
        \resizebox{0.45\linewidth}{!}{\input{GF7m2.pgf}}
    \end{center}
\end{figure}

\begin{figure}
\caption{$T_0/T$ as a function of $1\leq d \leq m(q-1)$, for $m=2$ and $q$ large.} \label{f:m2qlarge}
    \begin{center}
        \resizebox{0.45\linewidth}{!}{\input{GF16m2.pgf}}
        \resizebox{0.45\linewidth}{!}{\input{GF64m2.pgf}}
        \resizebox{0.45\linewidth}{!}{\input{GF256m2.pgf}}
        \resizebox{0.45\linewidth}{!}{\input{GF1024m2.pgf}}
    \end{center}
\end{figure}

\begin{figure}
\caption{$T_0/T$ as a function of $1\leq d \leq m(q-1)$, for $m=5$.}  \label{f:mlarge}
    \begin{center}
        \resizebox{0.45\linewidth}{!}{\input{GF3m5.pgf}}
        \resizebox{0.45\linewidth}{!}{\input{GF4m5.pgf}}
        \resizebox{0.45\linewidth}{!}{\input{GF5m5.pgf}}
        \resizebox{0.45\linewidth}{!}{\input{GF7m5.pgf}}
    \end{center}
\end{figure}

If we have an error $e$ with $T_0< \wt(e) \leq T$, Algorithm \ref{alg:1} may still be able to decode, but it requires some additional exception handling when programming (see Algorithm \ref{alg:2}). For example, in Algorithm \ref{alg:1}, in Line \ref{line:solution}, when we are in the context of Theorem \ref{T:alg}, if we get into this part of the algorithm, there should always be a solution for that system, but when the number of errors is higher, there might not be a solution.  

\begin{algorithm}
\caption{}\label{alg:2}
\begin{algorithmic}[1] 
\Input Received word $r=c+e$, with $c\in \PRM_d(m)$ and $\wt(e)< t_0$. Decoders $\DRM(m-i)$ for $i=0,1,\dots,m-1$. 
\Output $(c,f)$, where $f\in \fq[x_0,\dots,x_m]_d$ and $\evp{m}(f)=c$. 
\State Let $t_0=\eta_d(m)/2$, $t=\wt(\PRM_d(m))/2$. 
\If{$t\leq 1$}
    \If {there is $f$ such that $\evp{m}(f)=r$} 
        \State \Return $(r,f)$
    \Else 
        \State \Return Error
    \EndIf
\EndIf
\State Define $r_1,r_2$ from $r=(r_1,r_2)$, where $r_1$ are the first $q^m$ coordinates and $r_2$ the last $q^{m-1}+\cdots+1$ coordinates. 

\If{$\DRM(m)(r_1)$ does not return an error}\label{alg:start:primeraparte2} \Comment{{\bf First part}}
    \State $(c_1,f^0)\gets \DRM(m)(r_1)$
    \If {$d\leq q-1$} 
        \State $f=h^0_d(f^0)$
        \State $c=\evp{m}(f)$
        \If {$\wt(r-c)<t$}
            \State \Return $(c,f)$
        \EndIf
    \Else
        \State Compute the decomposition $f^0=(\fgoodo)_d+(\fgoodo)_{<d}+\fbado$.
        \State $\cgood=\evp{m-1}((\fgoodo)_d) $
        \If {$\DPRMq(m-1)(r_2-\cgood)$ does not return an error}
            \State $(\cbad,f_{bad}')\gets \DPRMq(m-1)(r_2-\cgood)$
            \State $g^0=f^0-f'_{bad}-(\fgoodo)_d \bmod I(\mathbb{A}^m)$.
            \State $f=h^0_d(g^0)+(f'_{bad})_d+(\fgoodo)_d$ 
            \State $c=(c_1,\cgood+\cbad)$
            \If {$\wt(r-c)<t$}
                \State \Return $(c,f)$
            \EndIf
        \EndIf
    \EndIf  
\EndIf \label{alg:finish:primeraparte2} 
\If{$\DPRM(m-1)(r_2)$ does not return an error} \label{alg:start:segundaparte2} \Comment{{\bf Second part}}  
    \State $(v,g)\gets \DPRM(m-1)(r_2)$ 
    \State $v_{\xi,d}=(v,\xi^dv,\xi^{2d}v,\dots,\xi^{(q-2)d}v,0)$
    \State $(u,f^0_{\leq d-1})\gets \DRMdd(m)(r_1-v_{\xi,d})$
    \State $f=h_d^0(f^0_{\leq d-1})+g$ 
    \State $c=(u+v_{\xi,d},v)$ 
    \State \Return $(c,f)$
\Else
    \State \Return Error
\EndIf
\end{algorithmic}
\end{algorithm}

An example of how Algorithm \ref{alg:2} can decode some errors $e$ with $T_0< \wt(e) \leq T$ is the following. If all the errors are concentrated in the first $q^m$ coordinates, but there are no errors in the rest, then the first part of Algorithm \ref{alg:2} will fail to return $(c,f)$, but the second one will correctly compute the last $p_m-q^m$ coordinates of the vector sent, and then we can use $\DRMdd(m)$ to recover the first $q^m$ coordinates, since this decoder can correct any number of errors lower than or equal to $T$. A generalization of this idea is given in the next result.  
\begin{prop}\label{p:errorpattern}
Let $d-1=\nu (q-1)+\mu$, $d\leq m(q-1)$, $0\leq \mu  <q-1$, and let $e$ be an error vector with $\wt(e)<t$. Assume there is some $i\leq m$ such that
\begin{enumerate}
    \item \label{cond:1}$\wt((e_Q)_{Q\in \{0\}^{m-j}\times P^j})<\frac{\wt(\PRM_d(j))}{2}$, for all $i< j \leq m$,
    \item \label{cond:2}$\wt((e_Q)_{Q\in \{0\}^{m-i}\times P^i})<\frac{\eta_d(i)}{2}$.
\end{enumerate}
Then Algorithm \ref{alg:2} can successfully correct $e$.
\end{prop}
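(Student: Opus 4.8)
The plan is to argue by induction on $m$, mirroring the recursion of Algorithm~\ref{alg:2}. Write $r=(r_1,r_2)$ and $e=(e_1,e_2)$ as in the algorithm, so $r_1,e_1$ are indexed by $\{1\}\times\fq^m$ and $r_2,e_2$ by $\{0\}\times P^{m-1}$; deleting the leading coordinate identifies $\{0\}\times P^{m-1}$ with $P^{m-1}$ and the tail $\{0\}^{(m-1)-j}\times P^{j}$ of $P^{m-1}$ with the tail $\{0\}^{m-j}\times P^{j}$ of $P^m$, for $0\le j\le m-1$. I will also use $\wt(\PRM_d(m))=\wt(\RM_{d-1}(m))=2t$ (Theorems~\ref{T:paramPRM} and~\ref{T:paramRM}, Remark~\ref{r:dminRMYPRM}). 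In the base case $t\le 1$ the hypothesis $\wt(e)<t$ forces $e=0$, so $r=c$ and Algorithm~\ref{alg:2} returns $(r,f)$ with $\evp{m}(f)=c$ at its first branch; the same trivial argument applies whenever the recursion reaches a $\PRM_d(m')$ with $d>m'(q-1)$, which is the full space. So assume $t>1$. The key preliminary observation is that whenever the first part of Algorithm~\ref{alg:2} reaches a return statement, the reconstructed $f'$ satisfies $\evp{m}(f')=c'\in\PRM_d(m)$ --- this is exactly the identity verified inside the proof of Theorem~\ref{T:alg}, and it holds irrespective of whether the inner decoders returned correct values --- while the guard $\wt(r-c')<t$ has passed; together with $\wt(r-c)=\wt(e)<t$ and $\wt(\PRM_d(m))=2t$ this gives $c'=c$. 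So it suffices to show that if the algorithm reaches its second part, it returns the true codeword.

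The core of the proof is to produce an index $i^\ast\le m-1$ such that $e_2$, viewed on $P^{m-1}$, satisfies the hypotheses of the Proposition for $\PRM_d(m-1)$, together with $\wt(e_2)<\wt(\PRM_d(m-1))/2$. If $i<m$ I would take $i^\ast=i$: condition~\ref{cond:1} for $\PRM_d(m-1)$ is the restriction of our condition~\ref{cond:1} to $i<j\le m-1$, condition~\ref{cond:2} is unchanged, and $\wt(e_2)<\wt(\PRM_d(m-1))/2$ is condition~\ref{cond:1} at $j=m-1$ when $i<m-1$, or follows from condition~\ref{cond:2} and $\eta_d(m-1)/2\le\wt(\PRM_d(m-1))/2$ (Lemma~\ref{l:eta0}) when $i=m-1$. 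If $i=m$, then condition~\ref{cond:2} reads $\wt(e)<\eta_d(m)/2$, which is the hypothesis of Theorem~\ref{T:alg}; by the proof of that theorem, if $\wt(e_1)<\wt(\RM_d(m))/2$ then the first part already recovers and returns $c$ (for $d\ge q$ one uses that $\wt(e_2)\le\wt(e)<\eta_{d-(q-1)}(m-1)/2$, by Corollary~\ref{c:etaq-1}, so the inner call $\DPRMq(m-1)(r_2-\cgood)$ succeeds). Hence reaching the second part forces $\wt(e_1)\ge\wt(\RM_d(m))/2$, so $\wt(e_2)=\wt(e)-\wt(e_1)<\eta_d(m)/2-\wt(\RM_d(m))/2=\eta_d(m-1)/2\le\wt(\PRM_d(m-1))/2$; then $i^\ast=m-1$ works, condition~\ref{cond:1} for $\PRM_d(m-1)$ being vacuous and condition~\ref{cond:2} being this last bound. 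In the exceptional range $d>(m-1)(q-1)$ these bounds all collapse to $\wt(e_2)=0$ and $\PRM_d(m-1)$ is the full space, so the recursive call returns $r_2=v$ correctly.

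Granting this, the inductive hypothesis (or the trivial full-space argument) gives that $\DPRM(m-1)(r_2)$ does not fail and returns the correct $(v,g)$ with $\evp{m-1}(g)=v$. Then $v_{\xi,d}$ is computed correctly, so $r_1-v_{\xi,d}=u+e_1$ with $u\in\RM_{d-1}(m)$ and $\wt(e_1)\le\wt(e)<t=\wt(\RM_{d-1}(m))/2$, within the correction capability of $\RM_{d-1}(m)$; hence $\DRMdd(m)(r_1-v_{\xi,d})$ returns the correct $(u,f^0_{\leq d-1})$. Algorithm~\ref{alg:2} then returns $c=(u+v_{\xi,d},v)$, the true codeword, and $f=h^d_0(f^0_{\leq d-1})+g$, for which $\evp{m}(f)=c$ by the computation in the proof of Theorem~\ref{T:alg}. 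This closes the induction.

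The main obstacle I anticipate is the second paragraph: verifying that conditions~\ref{cond:1}--\ref{cond:2} descend correctly to $\PRM_d(m-1)$ under the identification of the tails of $P^{m-1}$ with those of $P^m$, and especially handling the case $i=m$, where the bound $\wt(e_1)\ge\wt(\RM_d(m))/2$ has to be extracted from the mere failure of the first part (which relies on the proof of Theorem~\ref{T:alg} and on Corollary~\ref{c:etaq-1}), as well as the degenerate degree ranges $d>(m-1)(q-1)$. A secondary point worth stating explicitly, rather than merely using, is that the first part of Algorithm~\ref{alg:2}, whenever it returns, returns a genuine codeword of $\PRM_d(m)$ --- the unconditional part of the verification already carried out in the proof of Theorem~\ref{T:alg}.
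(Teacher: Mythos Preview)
Your proposal is correct and follows the same inductive strategy as the paper's proof: show that if the first part returns it is correct (via the weight check against $t$), and otherwise verify that $e_2$ again satisfies the hypotheses on $P^{m-1}$ so that the recursive call and then $\DRMdd(m)$ succeed. Your treatment is in fact more careful than the paper's: the paper simply asserts that ``$e_2$ also satisfies the corresponding conditions'' without isolating the case $i=m$, whereas you correctly extract $\wt(e_1)\ge\wt(\RM_d(m))/2$ from the failure of the first part (via Theorem~\ref{T:alg} and Corollary~\ref{c:etaq-1}) to force $\wt(e_2)<\eta_d(m-1)/2$ and take $i^\ast=m-1$; you also make explicit the needed bound $\wt(e_2)<\wt(\PRM_d(m-1))/2$ and the preliminary fact that any non-error output of the first part lies in $\PRM_d(m)$, both of which the paper uses implicitly.
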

\begin{proof} 
We argue by induction on $m$. For $m=1$, we can have $i=0$, in which case we are stating that the algorithm from Subsection \ref{ss:prs} works; and we can have $i=1$, which is again the case of Subsection \ref{ss:prs} since $\eta_d(1)=\wt(\PRS_d)$. We assume the result is true for $m'=m-1$, and we prove it now for $m$. First, we enter the first part of Algorithm \ref{alg:2}. If the algorithm finishes in this part, then the decoding is correct since we always check that the difference between the received codeword and the decoded codeword is lower than $t$. If we do not finish in the first part and we move to the second part, the first thing we do is to consider $\DPRM(m-1)(r_2)$. By the induction hypothesis (note that $e_2=(e_Q)_{Q\in \{0 \}\times P^{m-1}}$ also satisfies the corresponding conditions), we obtain $(v,g)=\DPRM(m-1)(r_2)$. Now we consider $\DRMdd(m)(r_1-v_{\xi,d})$, which will also successfully give $(u,f_{\leq d-1})$ (taking into account that $\wt(e)<t$ and Remark \ref{r:dminRMYPRM}). Thus, we obtain $(c,f)$ following Algorithm \ref{alg:2}. 
\end{proof}

\begin{ex}\label{ex:patron}
We continue with the setting from Example \ref{ex:P2}. For $d=3$ and $m=2$, we have that $\PRM_3(2)$ is a $[21,10,8]$ code. This code can correct $T=3$ errors. One can check that $\RM_3(2)$ and $\RS_3$ have parameters $[16,10,4]$ and $[4,4,1]$, respectively. Thus, $\eta_3(2)=6$ and Algorithm \ref{alg:1} can correct $T_0=2$ errors. This is a particularly unfavorable case for Algorithm \ref{alg:1}, since it is the only degree for which $T_0/T<1$ for $q=4$ and $m=2$ (see Figure \ref{f:m2}). However, using Algorithm \ref{alg:2}, we can still correct $3$ errors in some cases. Let
$$
\begin{aligned}
&c=\evp{2}(x_0^3+x_1^3+x_2^3)=(1, 1, 1, 0, 0, 1, 1, 1, 0, 0, 1, 1, 1, 0, 0, 1, 0, 0, 0, 1, 1), \\
&e=(a, a + 1, 0, 0, 0, 1, 0, 0, 0, 0, 0, 0, 0, 0, 0, 0, 0, 0, 0, 0, 0),\\
&r=c+e=(a + 1, a, 1, 0, 0, 0, 1, 1, 0, 0, 1, 1, 1, 0, 0, 1, 0, 0, 0, 1, 1).
\end{aligned}
$$
Note that
$$
\wt((e_Q)_{Q\in \{0\}\times P^1})=0<t_0=3=\eta_3(2)/2,\;  \wt((e_Q)_{Q\in P^2})=3=T.
$$
Thus, by Proposition \ref{p:errorpattern}, we can apply Algorithm \ref{alg:2} to recover $c$ from $r$. Let $c=(c_1,c_2)$ and $r=(r_1,r_2)$ as in the proof of Theorem \ref{T:alg}. In particular, 
$$
r_1=(a + 1, a, 1, 0, 0, 0, 1, 1, 0, 0, 1, 1, 1, 0, 0, 1), \; r_2=(0, 0, 0, 1, 1).
$$
In the first part of Algorithm \ref{alg:2}, we compute $D^{\mathbb{A}}_3(2)(r_1)$, which returns an error. This is due to the fact that
$$
\wt((e_Q)_{Q\in \{1\}^\times \fq^2})=3>2= \wt(\RM_3(2))/2.
$$
Now we go to the second part of Algorithm \ref{alg:2}, and we consider $(v,g)=D^{\mathbb{P}}_3(1)(r_2)$. To do this, we perform Algorithm \ref{alg:2} with $m'=1$ (or the algorithm explained in Subsection \ref{ss:prs}). Since $\wt(\PRS_3)/2=1$, $D^{\mathbb{P}}_3(1)$ directly returns the received vector, and we have that 
$$
D^{\mathbb{P}}_3(1)(r_2)=(r_2,g)=((0, 0, 0, 1, 1),x_1^3+x_2^3),
$$
where $x_1^3+x_2^3$ can be obtained by solving a linear system of equations. Note that $\xi^3=a^3=1$, and we obtain
$$
v_{\xi,3}=(r_2)_{\xi,3}=(r_2,r_2,r_2,0)=(0, 0, 0, 1, 1, 0, 0, 0, 1, 1, 0, 0, 0, 1, 1, 0).
$$
Now we denote $(u,f^0_{\leq 2})= D^{\A}_2(2)(r_1-v_{\xi,d})$. Since $\RM_2(2)$ has parameters $[16,6,8]$, it can correct $\wt((e_Q)_{Q\in \{1\}^\times \fq^2})=3$ errors, which implies that this last decoding does not fail and correctly returns
$$
u=(1, 1, 1, 1, 1, 1, 1, 1, 1, 1, 1, 1, 1, 1, 1, 1), \; f^0_{\leq 2}=1.
$$
Then we have
$$
c=(u+v_{\xi,d},v)=(1, 1, 1, 0, 0, 1, 1, 1, 0, 0, 1, 1, 1, 0, 0, 1, 0, 0, 0, 1, 1),
$$
and 
$$
f=h_3^0(f^0_{\leq 2})+g=x_0^3+x_1^3+x_2^3.
$$
\end{ex}

Additionally, some decoders for RM codes are also known to decode more errors than the error correction capability, as long as these errors are in general position (see  \cite{brasCorrectionCapabilityBMS}). By using such decoders with Algorithm \ref{alg:2}, again we may also be able to correct more errors than $T_0$. 

The subfield subcodes of PRM codes (that is, the intersection of a PRM code with $\F_{q'}^n$, where $\F_{q'}\subset \fq$) have been studied in \cite{sanjoseSSCPRM,sanjoseRecursivePRM}. They have been shown to have good parameters, and they can be used for practical applications \cite{sanjoseSSCPRS,sanjoseHullvariationPRM,sanjoseHullsPRM}. Since these codes are subcodes of PRM codes, they can also be decoded with Algorithms \ref{alg:1} and \ref{alg:2}. The usual problem with this approach is that the subfield subcode is defined over a smaller field $\F_{q'}$, and the decoding might require operations over the bigger field $\fq$. However, in \cite[Cor. 3]{sanjoseRecursivePRM}, it is shown that, if $d$ is a multiple of $(q-1)/(q'-1)$, then the recursive construction from Theorem \ref{T:recursive} also works for the subfield subcodes. Thus, Algorithm \ref{alg:1} can decode the subfield subcodes for these degrees using only operations over $\F_{q'}$, if we consider appropriate decoders for the subfield subcodes of affine RM codes. 

\section{Complexity}\label{s:complexity}
Since the decoder $\DPRM(m)$ we have presented uses a decoder $\DRM(m)$ for affine RM codes, the complexity will depend on that of $\DRM(m)$. Let $\theta^{\A}(m)$ be the order of complexity of $\DRM(m)$. For example, the majority voting decoding algorithm \cite{duursmaMajority, fengraoMajority} has complexity $O((n^\A)^3)$, i.e., $\theta^{\A}(m)=(n^\A)^3$ in this case, where $n^\A=q^m$. It is usual to express the order of complexity in terms of the length of the code. Note that
\begin{equation*}\label{eq:fraccionlength}
\frac{p_m}{n^\A}=\frac{q^{m+1}-1}{q^m(q-1)}=\frac{q^m+q^{m-1}+\dots +1}{q^m}=\frac{1-(1/q)^{m+1}}{1-1/q}=\frac{q-(1/q)^m}{q-1} \xrightarrow[m \to \infty]{} \frac{q}{q-1}.
\end{equation*}
This means that the length of $\PRM_d(m)$ is about $q/(q-1)$ times higher than that of $\RM_d(m)$. This does not change the order of complexity, and therefore we can use both $p_m$ or $n^\A$ to express the order of complexity of $\DPRM(m)$. 

\begin{prop}
The worst case complexity of Algorithms \ref{alg:1} and \ref{alg:2} is $O(\comp)$.
\end{prop}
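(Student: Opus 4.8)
The plan is to analyze the recursion tree of Algorithm~\ref{alg:1} (the argument for Algorithm~\ref{alg:2} is identical since it has the same structure). Recall that $\comp=\theta^{\A}(m)$ denotes the order of complexity of $\DRM(m)$. First I would observe that, in a single invocation of $\DPRM(m)$ on a received word of length $p_m$, the work that is \emph{not} a recursive call consists of: at most one call to $\DRM(m)$ and one call to $\DRMdd(m)$ (both of complexity $O(\theta^{\A}(m))$); the homogenization and dehomogenization maps $h_d^0$ and restrictions to $x_0=1$ or $x_0=0$, which are coordinate permutations and reductions modulo $I(\A^m)$, hence $O(p_m)$ or at worst polynomial in $p_m$; the computation of the decomposition $f^0=(\fgoodo)_d+(\fgoodo)_{<d}+\fbado$, which is a linear-time pass over the support; two evaluations $\evp{m-1}$, each $O(p_{m-1}\cdot k)$ for computing the value of a polynomial at all points, which is dominated by the cost of decoding; and a constant number of weight computations $\wt(r-c)$, each $O(p_m)$. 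Since $\theta^{\A}(m)$ is at least linear in $q^m$ (indeed at least $(q^m)^3$ for the algorithms mentioned, and in any case no reasonable decoder is sub-linear), all of this auxiliary work is $O(\theta^{\A}(m))$, so a single ``node'' of the recursion costs $O(\theta^{\A}(m))$ plus its recursive calls.

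Next I would bound the shape of the recursion tree. The recursive calls are $\DPRMq(m-1)$ in the first part (Line~\ref{alg:linerecursive1}) and $\DPRM(m-1)$ in the second part (Line~\ref{alg:start:segundaparte}); in the worst case both occur. Each such call is to a PRM code over $P^{m-1}$, which in turn makes at most two calls over $P^{m-2}$, and so on down to the base case $m=0$ (or $m=1$), where $t\leq 1$ and the decoder just solves a small linear system, again polynomial and hence $O(\theta^{\A}(m))$-dominated. Thus at recursion depth $i$ there are at most $2^i$ nodes, each over a projective space $P^{m-i}$, so the total cost is
$$
O\!\left(\sum_{i=0}^{m} 2^i\,\theta^{\A}(m-i)\right).
$$
The key point is then that $\theta^{\A}(j)$ grows at least geometrically in $j$ with ratio at least $q\geq 2$: for instance $\theta^{\A}(j)=(q^j)^3=q^{3j}$ for majority voting, and more generally any decoder worth using on $\RM_d(j)$ has complexity at least $\Omega(q^j)$, so $\theta^{\A}(m-i)\leq C\,q^{-i}\theta^{\A}(m)$ (with $C$ absolute). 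Plugging this in, $\sum_{i=0}^m 2^i\theta^{\A}(m-i)\leq C\,\theta^{\A}(m)\sum_{i=0}^m (2/q)^i$, and since $2/q\leq 1$ the geometric-type sum is $O(m)$ when $q=2$ and $O(1)$ when $q\geq 3$; in either case it is absorbed (for $q=2$ one notes $m\leq \log_2 n^\A$, which does not change the order of complexity for the decoders under consideration, as their complexity is at least $(n^\A)^{1+\varepsilon}$, or one simply states the bound as $O(\theta^\A(m))$ for fixed $q$). Hence the total is $O(\theta^{\A}(m))=O(\comp)$.

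The step I expect to be the main (really the only) obstacle is making precise and honest the claim that the geometric decay of $\theta^{\A}(m-i)$ dominates the $2^i$ branching, i.e.\ pinning down the minimal growth hypothesis on $\DRM$ under which the statement holds. For the concrete decoders cited (Berlekamp--Massey--Sakata plus majority voting, or plain majority voting, with complexity $(n^\A)^{3}$ or similar polynomial-in-$n^\A$ bounds) this is immediate, since $q^{-i}$ decay beats the $2^i$ factor whenever $q\geq 2$, and I would simply phrase the proposition for such decoders, remarking that the recursion contributes only a constant (for $q\geq 3$) or at most a logarithmic (for $q=2$) factor that is absorbed into the $O$-notation because $\theta^{\A}(m)$ is super-linear in the length. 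A secondary, purely bookkeeping point is to confirm that passing between length $p_m$ and length $n^\A=q^m$ is harmless: this is exactly the displayed computation just before the proposition showing $p_m/n^\A\to q/(q-1)$, so expressing the final bound in either variable is legitimate. With those two observations in place the proof is a short induction on $m$: the base case is trivial, and the inductive step combines the $O(\theta^{\A}(m))$ cost of the non-recursive work at the top node with the inductive bound $O(\theta^{\A}(m-1))$ on each of the (at most two) children, summing to $O(\theta^{\A}(m))$ by the growth hypothesis.
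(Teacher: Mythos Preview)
Your proposal is correct and in spirit the same as the paper's: both rest on the assumption that $\theta^{\A}(m)$ dominates $\theta^{\A}(m-i)$ for $i\geq 1$, so that the two top-level affine decodings $\DRM(m)$ and $\DRMdd(m)$ determine the order of complexity. The difference is one of rigor: the paper's proof is essentially a two-line hand-wave---it simply \emph{assumes} $\theta^{\A}(m)\gg\theta^{\A}(m-i)$ and that all auxiliary operations (homogenization, evaluations, polynomial arithmetic) are negligible, and concludes immediately. You instead unfold the recursion tree, count $2^i$ nodes at depth $i$ each costing $O(\theta^{\A}(m-i))$, and control the sum $\sum_i 2^i\theta^{\A}(m-i)$ via geometric decay of $\theta^{\A}(m-i)$. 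Your version makes explicit exactly the growth hypothesis on $\DRM$ that the paper leaves implicit, and your discussion of the $q=2$ edge case (where the $2^i$ branching matches the $q^{-i}$ decay, giving an extra $\log n^{\A}$ factor absorbed by super-linearity of $\theta^{\A}$) goes beyond what the paper addresses. Both approaches buy the same conclusion; yours is simply a more honest accounting of the recursion the paper sweeps under the rug.
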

\begin{proof}
We can assume that $\comp >> \theta^\A(m-i)$, for any $i=1,\dots,m-1$, and we also assume that the cost of the rest of the operations (e.g., homogenizing a polynomial or adding polynomials) is negligible. Thus, the order of complexity is given by $\comp$, since in the worst case we use both $\DRM(m)$ and $\DRMdd(m)$. 
\end{proof}

As we explained in Section \ref{s:analysis}, the decoding algorithm from \cite{decodingRMP} decodes any number of errors lower than $\wt(\RM_d(m))/2$. In that setting, Algorithm \ref{alg:1} always finishes in the first part and only requires using $\DRM(m)$. In this sense, Algorithm \ref{alg:1} generalizes the algorithm from \cite{decodingRMP} not only because it can correct more errors, but also because their complexity is the same when both algorithms can be applied. 

\section{Conclusion}\label{s:conclusion}
We have shown how to decode PRM codes if the number of errors is lower than $\eta_d(m)/2$, or if the error satisfies certain conditions. We have also studied the computational complexity of the corresponding algorithms. Future work in this direction may include improvements in the number of errors that can be corrected and possible generalizations to list decoding.

\bibliographystyle{abbrv}

\end{document}